\documentclass[11pt]{amsart}

\newcommand\version{April 5, 2026}

\usepackage{amsmath,amsfonts,amsthm,amssymb,amsxtra}
\usepackage{hyperref}
\usepackage{bbm} 
\usepackage{bm} 
\usepackage{accents} 
\usepackage[normalem]{ulem}

\newtheorem{theorem}{Theorem}[section]

\newtheorem{proposition}[theorem]{Proposition}
\theoremstyle{definition}

\newtheorem{remark}[theorem]{Remark}

\numberwithin{equation}{section}

\newcommand{\C}{\mathbb{C}}

\renewcommand{\epsilon}{\varepsilon}

\newcommand{\N}{\mathbb{N}}

\renewcommand{\phi}{\varphi}
\newcommand{\R}{\mathbb{R}}

\newcommand{\Sph}{\mathbb{S}}

\newcommand{\Z}{\mathbb{Z}}

\DeclareMathOperator{\dom}{dom}

\DeclareMathOperator{\spec}{spec}

\DeclareMathOperator{\Tr}{Tr}
\DeclareMathOperator{\tr}{Tr}

\def\bs{\mathbb{S}}

\def\balpha{\bm{\alpha}}

\newcommand{\me}[1]{\mathrm{e}^{#1}}
\newcommand{\one}{\mathbf{1}}

\begin{document}

\title[Sharp upper bounds for the density of relativistic atoms]{Sharp upper bounds\\ for the density of relativistic atoms:\\ Noninteracting case}

\author[R. L. Frank]{Rupert L. Frank}
\address[Rupert L. Frank]{Mathematisches Institut, Ludwig-Maximilians Universit\"at M\"unchen, The\-resienstr. 39, 80333 M\"unchen, Germany, and Munich Center for Quantum Science and Technology (MCQST), Schellingstr. 4, 80799 M\"unchen, Germany, and Mathematics 253-37, Caltech, Pasa\-de\-na, CA 91125, USA}
\email{r.frank@lmu.de}

\author[K. Merz]{Konstantin Merz}
\address[Konstantin Merz]{Institute for Theoretical Physics, ETH Zurich, Wolfgang-Pauli-Strasse 27, 8093 Zurich, Switzerland, and Institut f\"ur Analysis und Algebra, Technische Universit\"at Braunschweig, Universit\"atsplatz 2, 38106 Braun\-schweig, Germany}
\email{k.merz@tu-bs.de}

\subjclass[2020]{81V45,35Q40,46N50}
\keywords{Relativistic atoms, Chandrasekhar operator, Dirac operator, eigenfunction estimates, heat kernel}

\date{\version}

\begin{abstract}
  We prove an optimal upper bound for the density of electrons of an infinite Bohr atom (no electron-electron interactions) described by the relativistic operators of Chandrasekhar and Dirac. We also consider densities in each  angular momentum channel separately.
\end{abstract}

\dedicatory{Dedicated to Barry Simon on the occasion of his 80th birthday}

\thanks{Partial support through US National Science Foundation grant DMS-1954995 (R.L.F.), as well as through the German Research Foundation through EXC-2111-390814868 (R.L.F.) and TRR 352-Project-ID 470903074 (R.L.F.) is acknowledged.}

\maketitle


\section{Introduction and main results}

\subsection{The Chandrasekhar--Coulomb operator}

In \cite{Franketal2020P}, together with Heinz Siedentop and Barry Simon, we proved the analogue of Lieb's strong Scott conjecture for atoms described by a relativistic model of Chandrasekhar. We will discuss this result and its background in more detail later in this introduction. For now, the relevant point is that it involves the nonlocal operator
\begin{align}
	\label{eq:defchandra}
	C_\kappa := \sqrt{1-\Delta} - 1 - \kappa\, |x|^{-1} \quad \text{in} \ L^2(\R^3) \,,
\end{align}
which describes a hydrogen atom with effective nuclear charge $\kappa$ in the Chandrasekhar model. The operator is well-defined as a selfadjoint, lower semibounded operator for $\kappa\leq 2/\pi$ as a consequence of the sharp Kato inequality, see \eqref{eq:hardy} below. The negative spectrum of $C_\kappa$ is discrete and we can introduce the density
$$
\one_{(-\infty,0)}(C_\kappa)(x,x)
\qquad\text{for}\ x\in\R^3
$$
as the square sum of the normalized eigenfunctions of $C_\kappa$ corresponding to its negative eigenvalues. It is this density that appears in the strong Scott asymptotics in \cite{Franketal2020P}.

Our main result is the following sharp upper bound on this density.

\begin{theorem}\label{main}
	Let $0<\kappa\leq 2/\pi$. Then there is a constant $A_\kappa<\infty$ such that for all $x\in\R^3$,
	$$
	\one_{(-\infty,0)}(C_\kappa)(x,x) \leq A_\kappa \left( |x|^{-2\eta_\kappa} \, \one(|x|\leq 1) + |x|^{-\frac32} \, \one(|x|>1) \right).
	$$
	Here $\eta_\kappa$ is the unique number in $(0,1]$ such that $(1-\eta_\kappa)\tan\frac{\pi\eta_\kappa}{2} = \kappa$.
\end{theorem}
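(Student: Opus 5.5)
We bound the diagonal of the spectral projection by a heat-kernel-type quantity and treat small and large $|x|$ separately.

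\textbf{Reduction.} Since $C_\kappa\geq -1$ (indeed $\sqrt{1-\Delta}-1-\kappa|x|^{-1}\geq -1$ by Kato's inequality, as $\sqrt{-\Delta}\geq \kappa|x|^{-1}$), we have for every $t>0$ the operator inequality
\[
\one_{(-\infty,0)}(C_\kappa)\leq \me{t}\,\me{-tC_\kappa}\,\one_{(-\infty,0)}(C_\kappa)\leq \me{t}\,\me{-tC_\kappa}.
\]
Hence $\one_{(-\infty,0)}(C_\kappa)(x,x)\leq \me{t}\,\me{-tC_\kappa}(x,x)$. However, this alone does not give $|x|^{-3/2}$ decay at infinity, because the heat kernel diagonal is $\gtrsim t^{-3/2}$ uniformly in $x$. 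So the two regions need different arguments.

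\textbf{Large $|x|$.} For $|x|>1$ I would use the eigenfunction expansion $\sum_j|\psi_j(x)|^2$ together with separation into angular momentum channels $\ell$. In channel $\ell$, the eigenvalues behave like $-\kappa^2/(2(n+\ell)^2)$, with hydrogen-like asymptotics for large $n$. I would compare with the Schrödinger hydrogen atom in the region where the kinetic energy is small, using that $\sqrt{1-\Delta}-1\geq$ a multiple of $-\Delta$ on low momenta. The key is a pointwise bound for radial eigenfunctions, $|u_{n\ell}(r)|^2\lesssim n^{-3}r^{-1/2}\cdot$(something summable), which comes from WKB or Agmon-type estimates in the classically allowed region $r\lesssim n^2$.

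Summing over $n$ and over $\ell\lesssim\sqrt r$ then yields $r^{-3/2}$, which is the known behavior of the infinite Bohr atom density in the nonrelativistic case. To implement this, I would prove a uniform bound
\[
\sum_{n}|\psi_{n\ell m}(x)|^2\lesssim \frac{1}{|x|^2}\min\{|x|^{-1/2},\ \ldots\}
\]
via a local resolvent or heat estimate on the annulus $|y-x|<|x|/2$. On that annulus the Coulomb potential is nearly constant, of size $\kappa/|x|$. Localization errors for the nonlocal operator are controlled by the decay of the kernel of $\sqrt{1-\Delta}$, which is exponential off the diagonal.

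\textbf{Small $|x|$.} For $|x|\leq1$ I would use $\me{-tC_\kappa}(x,x)$ with $t=1$. This requires the known sharp heat kernel bound for $\sqrt{-\Delta}-\kappa|x|^{-1}$ (Bogdan–Grzywny–Jakubowski–Pilarczyk type), which states
\[
\me{-t(\sqrt{-\Delta}-\kappa|x|^{-1})}(x,y)\asymp \Big(1+\tfrac{t}{|x|}\Big)^{\delta}\Big(1+\tfrac{t}{|y|}\Big)^{\delta}\, p_t(x,y),
\]
where $p_t$ is the Cauchy kernel and $\delta$ is chosen so that the generalized ground state is $|x|^{-\delta}$. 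Matching the exponents gives $\delta=\eta_\kappa$. Indeed, $(\sqrt{-\Delta}-\kappa|x|^{-1})|x|^{-\alpha}=0$ for the $\alpha$ determined by the stated tangent relation, which is exactly what defines $\eta_\kappa$.

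To pass from $\sqrt{-\Delta}$ to $\sqrt{1-\Delta}-1$, I would use subordination, since $\sqrt{1-\Delta}-1$ is the relativistic stable process with a Lévy measure dominated by that of $\sqrt{-\Delta}$. The difference $\sqrt{-\Delta}-(\sqrt{1-\Delta}-1)$ is a bounded operator (norm at most $1$), so a Trotter/Duhamel comparison gives
\[
\me{-tC_\kappa}(x,y)\leq \me{t}\,\me{-t(\sqrt{-\Delta}-\kappa|x|^{-1})}(x,y).
\]
Positivity is preserved because all operators involved are positivity-preserving. At $t=1$ and $|x|\leq1$ this yields the bound $|x|^{-2\eta_\kappa}$.

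The critical case $\kappa=2/\pi$, where $\eta=1$, needs the heat kernel bound at criticality, which is also available.

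\textbf{Main obstacle.} The main obstacle is the large-$|x|$ step. It requires controlling the sum over infinitely many eigenvalues accumulating at $0$ with the sharp $|x|^{-3/2}$ rate for a nonlocal operator. For this I would first try a localized heat kernel with $t\sim|x|^{1/2}$ weighted by the spectral factor $\me{t|E|}$, exploiting that eigenfunctions with $|E|\gg |x|^{-1}$ are exponentially small at $x$.
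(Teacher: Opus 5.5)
\paragraph{Small $|x|$.} This part is essentially the paper's proof of Theorem~\ref{boundrho}. You use $\one_{(-\infty,0)}(C_\kappa)(x,x)\le\me{-tC_\kappa}(x,x)$, the pointwise comparison $\me{-tC_\kappa}(x,y)\le\me{t}\,\me{-t(\sqrt{-\Delta}-\kappa|x|^{-1})}(x,y)$, and the Bogdan--Grzywny--Jakubowski--Pilarczyk bounds at $t=1$. One correction: the pointwise comparison comes from the subordination/L\'evy-measure domination you mention, not from the boundedness of $\sqrt{-\Delta}-\sqrt{1-\Delta}+1$. Passing to $\kappa>0$ also needs Trotter with a truncated potential and monotone convergence, as in Step~2b.

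\paragraph{Large $|x|$: the gap.} The genuine gap is the bound $|x|^{-3/2}$ for $|x|>1$. This is half of the statement and, for $\kappa=2/\pi$, one of its new parts. There you list tools rather than give an argument, and the tools do not fit.
\begin{itemize}
\item Pointwise WKB-type eigenfunction bounds for the nonlocal operator $\sqrt{1-\Delta}-1-\kappa/r$, uniform in $n$ and $\ell$ and including turning points, are not available off the shelf.
\item Agmon estimates only give decay in forbidden regions, not the allowed-region bounds you need.
\item You never address the infinitely many channels $\ell\gtrsim|x|^{1/2}$.
\item The radial bound you write down, $n^{-3}r^{-1/2}$, has the wrong power. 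For $L^2(dr)$-normalized hydrogenic functions the allowed-region size is of order $n^{-3}r^{1/2}$, with $n$ the principal quantum number. Only this, summed over $n\gtrsim r^{1/2}$ and over $\ell\lesssim r^{1/2}$ with weight $2\ell+1$ and then multiplied by $r^{-2}$, produces $r^{-3/2}$. Your version would give $r^{-5/2}$, which is false.
\item Your concrete fallback cannot work. By Trotter, $\me{-tC_\kappa}(x,x)\ge\me{-tC_0}(x,x)\gtrsim t^{-3/2}$ for $t\ge1$, so any heat-kernel bound at time $t\sim|x|^{1/2}$ is at best $|x|^{-3/4}$. The matching time scale would be $t\sim|x|$, but that requires long-time upper bounds on $\me{-tC_\kappa}$, which you do not supply.
\end{itemize}

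\paragraph{What the paper does instead.} The missing idea is a per-channel bound with \emph{explicit, summable} $\ell$-dependence, obtained without any eigenfunction asymptotics. The paper uses \eqref{eq:boundrhoellsummation2} and Theorem~\ref{chandral}.
\begin{itemize}
\item For $\ell\ge L_\kappa$ it writes the channel density as $\tr A^*B^*CBA$ with the energy shift $a_\ell=a(\ell+\frac12)^{-2}$, so that $\|A\|\le a_\ell$.
\item $\|B\|\le2$ follows from Hardy's inequality $r^{-1}\le(\ell+\frac12)^{-1}p_\ell$ together with $P\lesssim a_\ell^{-1/2}(\sqrt{P^2+1}-1+a_\ell)$.
\item $\tr C=(C_{0,\ell}+a_\ell)^{-2}(r,r)$ is an explicit free quantity.
\end{itemize}
This yields the bounds $r(\ell+\frac12)^{-4}$, $r^3(\ell+\frac12)^{-7}$ and $(\ell+\frac12)^{-1}$ in the three regimes. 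For the finitely many $\ell<L_\kappa$ the paper shows $\one_{(-\infty,0)}(C_{\kappa,\ell})(r,r)\lesssim_\ell 1$ via $(L_{\kappa,\ell}+a)^{-2}(r,r)\lesssim1$, using the radial Hardy heat kernel bounds. This works up to $\kappa=2/\pi$ because $\eta_\kappa(\ell)\le1+\ell$. Summing over $\ell$ with weight $2\ell+1$ then gives $|x|^{-3/2}$.
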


\begin{remark}
  \label{rem:discussionmain}
	(a) The function $[0,1]\ni \eta\mapsto (1-\eta)\tan\frac{\pi\eta}{2}$ is continuous and strictly monotone increasing with value $0$ for $\eta=0$ and limiting value $2/\pi$ as $\eta\to 1$. Therefore, $\eta_\kappa$ in the theorem is well-defined.\\
	(b) The bound in the theorem for $|x|> 1$ and $\kappa<2/\pi$ is from \cite{Franketal2020P}. The new result here is, on the one hand, the large distance bound for $\kappa =2/\pi$ and, on the other hand the bound for $|x|\leq 1$. Indeed, in \cite{Franketal2020P} we could prove for small $x$ only a bound $|x|^{-3/2}$ when $\kappa<(1+\sqrt 2)/4$ and a bound $|x|^{-2\eta_\kappa-\epsilon}$ for any $\epsilon>0$ when $\kappa\geq (1+\sqrt 2)/4$.\\
	(c) Our new bound is best possible for $|x|\leq 1$. Indeed, already the square of the ground state eigenfunction is bounded from below by a constant times $|x|^{-2\eta_\kappa}$, as shown in \cite[Theorem~1.3]{Jakubowskietal2023}. As we already mentioned in \cite{Franketal2020P}, we believe that our bound is also best possible for $|x|\geq 1$. Indeed, the regime of large $|x|$ is essentially nonrelativistic and for the relevant operator $-\frac12\Delta- \kappa |x|^{-1}$ Heilmann and Lieb \cite{HeilmannLieb1995} showed that the density behaves like a constant times $|x|^{-3/2}$. This is essentially a semiclassical limit.\\
	(d) We believe that the bound in Theorem \ref{main} can be strengthened to the existence, finiteness and positivity of the limits
	$$
	\lim_{x\to 0} |x|^{2\eta_\kappa}\one_{(-\infty,0)}(C_\kappa)(x,x)
	\qquad\text{and}\qquad
	\lim_{|x|\to +\infty} |x|^{\frac32}\one_{(-\infty,0)}(C_\kappa)(x,x) \,.
	$$
	Also, we believe that the second limit is the same as the one that Heilmann and Lieb \cite{HeilmannLieb1995} found in the nonrelativistic case. We plan to return to these questions in the future and we anticipate that the order-sharp bounds in the present paper will play an important role in proving the existence of the limits.	
\end{remark}


\subsection{The Dirac--Coulomb operator}

Our second main result concerns the analogue of Theorem \ref{main} for the Dirac--Coulomb operator
\begin{align}
	\label{eq:defdnu}
	D_\nu = -i\balpha\cdot\nabla + \beta -\nu \, |x|^{-1}\one_{\C^4}
	\quad \text{in}\ L^2(\R^3:\C^4) \,.
\end{align}
Here, $\balpha\cdot\nabla = \sum_{n=1}^3 \alpha_n \partial_n$, where $\alpha_1,\alpha_2,\alpha_3$ and $\beta$ are four Hermitian $\C^{4\times4}$ matrices obeying, with $\alpha_0=\beta$,
$$
\alpha_i \alpha_j + \alpha_j \alpha_i = 2 \delta_{ij} \one_{\C^4}
\qquad\text{for all}\ 0\leq i,j\leq 3 \,.
$$
(There is a traditional choice of these matrices, but any other choice leads to an operator that differs only by a change of basis in $\C^4$.)

Concerning the coupling constant $\nu$ in \eqref{eq:defdnu} we assume $\nu\in(0,1]$. Under this condition the operator \eqref{eq:defdnu} can be defined as a selfadjoint operator (see \cite{Nenciu1976} and also \cite{EstebanLoss2007}) and its spectrum in the interval $(-1,1)$ is discrete. In fact, the eigenvalues are all nonnegative, as follows from the explicit formula \eqref{eq:sommerfeldeigenvalues} below. Our interest is in the density
$$
\tr_{\C^4} \one_{[0,1)}(D_\nu)(x,x)
\qquad\text{for all}\ x\in\R^3 \,,
$$
defined as the square sum of the absolute values of the $L^2(\R^3:\C^4)$-norma\-lized eigenfunctions of $D_\nu$ corresponding to eigenvalues in the interval $[0,1)$. This density appears in the strong Scott asymptotics for the Furry model of an atom, proved in \cite{MerzSiedentop2022}.

Even though the eigenfunctions are explicitly known in terms of Laguerre polynomials, proving a sharp bound on $\tr_{\C^4} \one_{[0,1)}(D_\nu)(x,x)$ is rather subtle.

\begin{theorem}
	\label{main2}
	Let $0<\nu\leq 1$. Then there is a constant $A_\nu<\infty$ such that for all $x\in\R^3$,
	$$
	\tr_{\C^4} \one_{[0,1)}(D_\nu)(x,x) \leq A_\nu \left( |x|^{-2\Sigma_\nu} \one(|x|\leq 1) + |x|^{-\frac 32}\one(|x|>1) \right).
	$$
	Here $\Sigma_\nu := 1-\sqrt{1-\nu^2}$.
\end{theorem}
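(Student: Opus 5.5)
The plan is to use the explicit Sommerfeld solution of $D_\nu$: separate variables in spherical spinors and reduce the density to a sum over angular momentum channels. For each $\kappa_D\in\Z\setminus\{0\}$ (Dirac quantum number) and each radial quantum number $n_r\ge0$, the eigenfunction is $r^{-1}(g(r)\Omega_{\kappa_D,m}, i f(r)\Omega_{-\kappa_D,m})$. After summing over $m$ with $\sum_m|\Omega_{\kappa_D,m}|^2=(2j+1)/(4\pi)$ (the addition theorem), the density becomes
\[
\tr_{\C^4}\one_{[0,1)}(D_\nu)(x,x)=\frac{1}{4\pi r^2}\sum_{\kappa_D}(2j+1)\sum_{n_r}\bigl(g_{n_r,\kappa_D}(r)^2+f_{n_r,\kappa_D}(r)^2\bigr),\qquad r=|x|.
\]
Write $\gamma=\sqrt{\kappa_D^2-\nu^2}$. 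The radial functions are $r^{\gamma}e^{-\lambda r}$ times combinations of Laguerre polynomials $L^{(2\gamma)}_{n_r}(2\lambda r)$ and $L^{(2\gamma)}_{n_r-1}(2\lambda r)$, with $\lambda=\nu/N$, where $N=\sqrt{(n_r+\gamma)^2+\nu^2}$ is the apparent principal quantum number.

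I would first isolate the most singular channel, $|\kappa_D|=1$, where $\gamma_1=\sqrt{1-\nu^2}$. Near the origin each normalized eigenfunction there satisfies $g^2+f^2\lesssim c_{n_r} r^{2\gamma_1}$, so its contribution to the density is $\lesssim r^{2\gamma_1-2}=r^{-2\Sigma_\nu}$. The substantive point is that the normalization constants $c_{n_r}$ are summable in $n_r$. Using the explicit normalization, $c_{n_r}\sim \lambda^{2\gamma+1}\binom{n_r+2\gamma}{n_r}/(n_r!\,\Gamma\text{-factors})\sim n^{-3}$ (the relativistic analogue of $|\psi_{n}(0)|^2\sim n^{-3}$), which is summable. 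For small $r$ this gives $r^{-2\Sigma_\nu}$ uniformly, but only for $r\lesssim \lambda^{-1}/n$. For larger $r$ and each $n$, I would use Laguerre asymptotics (Bessel-type uniform bounds, e.g. $|e^{-t/2}t^{\alpha/2}L_n^{(\alpha)}(t)|\lesssim n^{\alpha/2}$ together with the Bessel regime $t\lesssim 1/n$, Erdélyi–Szegő type estimates). These reduce the per-channel sum to a bound of the form $\sum_n n^{-3}\min\{(nr)^{\ldots},\ldots\}$, which should produce $r^{-2\Sigma_\nu}$ for $r\le1$ and $r^{-3/2}$ for $r>1$ in each fixed channel. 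The higher channels $|\kappa_D|\ge2$ have $\gamma\ge\sqrt3$ and are less singular near $0$.

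The main obstacle is summing over all $\kappa_D$ with weight $2j+1$ to get $r^{-3/2}$ at large $r$, uniformly including $\nu=1$. There $\gamma_1=0$, the $|\kappa_D|=1$ Laguerre index degenerates, and the lower component $f$ is comparable to $g$ near $0$. My plan is to compare with the nonrelativistic hydrogen density channel by channel. For $|\kappa_D|\ge2$, $\gamma$ differs from $|\kappa_D|$ by $O(\nu^2/|\kappa_D|)$, so the Dirac radial functions are close to Schrödinger ones with $\ell\approx\gamma$ or $\gamma-1$ ($g$ resp. $f$, the latter with a prefactor $O(\nu/N)$). The classical-turning-point analysis underlying the Heilmann–Lieb $r^{-3/2}$ law then applies uniformly. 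Concretely, I would bound $g^2+f^2\lesssim$ a WKB envelope, roughly $(\lambda^2\ldots)/\sqrt{\text{(classical momentum)}}$ in the allowed region with exponential decay outside. Summing over $n_r$ and $\kappa_D$ amounts to a Riemann sum for the phase-space integral $\int\!\!\int \one(p^2/2-\nu/r<0)\,dp$ restricted to $r$, which is $\sim r^{-3/2}$. Making these envelope bounds uniform in both indices, in particular near the turning points where Airy behavior appears, is where I expect most of the work to lie. I would try to bypass pointwise WKB by using a generating-function or Christoffel–Darboux-type identity for $\sum_{n_r}$ of squared Laguerre functions at fixed $\gamma$, which turns the per-channel sum into a closed-form kernel that can be estimated directly.
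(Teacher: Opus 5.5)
There is a genuine gap: the theorem concerns the sum over \emph{all} channels with weight $2|\kappa_D|$, and your proposal contains no estimate that is uniform in $\kappa_D$.

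Your partial-wave reduction and your treatment of the single channel $|\kappa_D|=1$ near the origin are sound and match the paper's first step. You do not even need the Bessel regime there. Szeg\H{o}'s inequality $|L_n^{(\beta)}(x)|\le\binom{n+\beta}{n}e^{x/2}$ ($\beta\ge0$, $x\ge0$) cancels the factor $e^{-\lambda r}$ exactly, so $g_n^2+f_n^2\le c_n r^{2\gamma}$ holds for every $r>0$.

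The missing uniformity matters in both regimes.

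For $r\le 1$, saying that the channels $|\kappa_D|\ge2$ are ``less singular'' controls each channel separately but not the infinite sum. You need the constant $C(\gamma)$ in front of $r^{2\gamma}$ explicitly as a function of $\gamma$, decaying fast enough that $\sum_{\kappa_D}|\kappa_D|\,C(\gamma)\,r^{2\gamma-2}$ converges. The paper gets $\sum_n c_n\lesssim A_\nu (B_\nu/\gamma)^{4\gamma}$ from Stirling's formula. This is superexponentially small and makes the small-$r$ bound usable all the way up to $r\le\epsilon\kappa_D^2$.

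For $r>1$, everything hinges on a per-channel bound that is uniform in $r$ and $n_r$ and decays in $|\kappa_D|$. For this you only offer a programme (WKB envelopes uniform in both indices, Airy behaviour at turning points), which you yourself flag as the main work. The classical pointwise Laguerre-function estimates you allude to are uniform in $n$ at fixed parameter. Here the parameter is $\alpha=2\gamma\approx2|\kappa_D|$, which runs through all channels, so those estimates do not supply the needed bound. The Christoffel--Darboux fallback does not apply either. Within one channel the bound states are $r^\gamma e^{-\lambda_n r}L^{(2\gamma)}_{n}(2\lambda_n r)$ with $\lambda_n=\nu/N_n$ depending on $n$. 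At fixed $r$ you are therefore summing differently rescaled functions, which is not the diagonal of a Christoffel--Darboux kernel of one orthogonal system. Functions with a common $\lambda$ would form such a system, but they are not the eigenfunctions. For one fixed channel your Laguerre-asymptotics argument is plausible, but that is not where the difficulty lies.

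The paper avoids pointwise asymptotics altogether. For $|k|\ge2$ it writes $\Gamma_{\nu,k}\delta_r\Gamma_{\nu,k}=A^*B^*C^*DCBA$ with $A=(F_{\nu,k}+a_k)^{1/2}\Gamma_{\nu,k}$ and $a_k=a k^{-2}$.

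\begin{itemize}
\item $\|A\|^2\le a_k$, because the eigenvalues of channel $k$ in $[0,1)$ lie within $O(k^{-2})$ of $1$.
\item $B$ and $C$ are bounded uniformly in $k$. This uses the identity $-\frac12\Delta=(D_0-1)+\frac12(D_0-1)^2$, Hardy's inequality $(\ell+\frac12)^2r^{-2}\le p_\ell^2$, and comparison with the nonrelativistic operators $H_{\nu,k}$ and $H_{0,k}$.
\item $\tr D=\tr_{\C^2}(H_{0,k}+a_k)^{-1}(r,r)$ is $\lesssim r/|k|$ for $r\le k^2$ and $\lesssim|k|$ otherwise, from the explicit free radial resolvent.
\end{itemize}

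Together these give a bound $\lesssim|k|^{-1}$ for all $r$. The channel $|k|=1$ is handled separately, by a Morozov--M\"uller comparison with a Chandrasekhar-type operator plus the heat-kernel resolvent bound. Splitting the $k$-sum at $r\approx\epsilon k^2$ then yields both $r^{-2\Sigma_\nu}$ and $r^{-3/2}$. Some device of this kind, giving $k$-uniform control without resolving turning points, is what your plan is missing.
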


\begin{remark}
	(a) The bound in the theorem for $|x|>1$ is from \cite{MerzSiedentop2022} when $\nu<1$. The new result here is, on the one hand, the large distance bound for $\nu=1$ and, on the other hand, the bound for $|x|\leq 1$. Indeed, in \cite{MerzSiedentop2022} the authors could prove for small $x$ only a bound $|x|^{-3/2}$ when $\nu<\sqrt{15}/4$ and a bound $|x|^{-2\Sigma_\nu-\epsilon}$ when $\sqrt{15}/4\leq\nu<1$. (We correct here a minor inaccuracy in the statement of \cite[Theorem 3]{MerzSiedentop2022} in the case $\nu=\sqrt{15}/4$.)\\
	(b) Our bound is best possible for $|x|\leq 1$. Indeed, the ground state of $D_\nu$ is $|x|^{-\Sigma_\nu} e^{-\nu|x|}$ times a constant spinor.\\
	(c) As is already mentioned in \cite{MerzSiedentop2022}, we believe that the bound is also best possible for $|x|\geq 1$. The reason is the same as in Theorem \ref{main}, namely that this regime is essentially nonrelativistic.
\end{remark}


\subsection{Background: the strong Scott conjecture}

Let us describe the context in which Theorem \ref{main} is relevant. For more details we refer to the original paper \cite{Franketal2020P}, the alternative proof in \cite{Franketal2023} and the review \cite{Franketal2023T}.

We consider a neutral system of electrons and a fixed nucleus at the origin of charge $Z\in\N$. The electrons interact with each other and with the nucleus via Coulomb forces. In the relativistic model of Chandrasekhar, with $c$ denoting the speed of light, the system is described by the Hamiltonian
\begin{align}
	\label{eq:manybodychandrasekhar}
	\sum_{\nu=1}^Z\left((c^4 - c^2 \Delta_\nu)^{1/2} - c^2 - \frac{Z}{|x_\nu|}\right) + \sum_{1\leq\nu<\mu\leq Z}\frac{1}{|x_\nu-x_\mu|} \quad \text{in} \ \bigwedge_{\nu=1}^Z L^2(\R^3),
\end{align}
where $\bigwedge_{\nu=1}^Z L^2(\R^3)$ denotes the totally antisymmetric tensor product of $Z$ copies of $L^2(\R^3)$ with itself. The choice of this Hilbert space reflects the fact that electrons are fermions, i.e., their wave function is antisymmetric under the exchange of any two electron coordinates. Here, for the sake of simplicity, we ignore the spin of the electrons.

The Hamiltonian \eqref{eq:manybodychandrasekhar} is well-defined as a selfadjoint, lower semibounded operator provided that $Z/c \leq 2/\pi$. Thus, the many electron limit $Z\to\infty$ is necessarily combined with the nonrelativistic limit $c\to\infty$. For simplicity, we shall assume that
$$
\frac{Z}{c} = \kappa \in(0,\tfrac2\pi)
\qquad\text{is fixed} \,.
$$

As shown in \cite{Sorensen2005}, the ground state energy in this limit is given by
$$
E^{\rm TF} \, Z^\frac73 + o(Z^\frac 73) \,,
$$
where $E^{\rm TF}$ is a negative constant, independent of $\kappa$. In fact, $E^{\rm TF} \, Z^\frac73$ is the nonrelativistic Thomas--Fermi energy. This leading order statement is the relativistic analogue of the foundational result by Lieb and Simon \cite{LiebSimon1977}.

Having established the leading order behavior, it is natural to investigate the subleading correction. In the nonrelativistic case, this was predicted by Scott \cite{Scott1952} and derived rigorously in important works by \cite{Hughes1986,SiedentopWeikard1987O,SiedentopWeikard1989}. In the relativistic case this was achieved in \cite{Franketal2008,Solovejetal2008}; see also \cite{Franketal2009,HandrekSiedentop2015,Fournaisetal2020,MengSiedentop2025} for similar results in related models. In each case, the subleading term is of order $Z^2$. Remarkably, in contrast to the leading order term, the coefficient of the subleading correction depends nontrivially on the parameter $\kappa$.

Besides these results about the ground state energy, also results about the ground state itself, or rather its density, are of interest. We let $\Gamma$ denote a (not necessarily pure) ground state of the Hamiltonian in \eqref{eq:manybodychandrasekhar}. Its density $\rho\in L^1(\R^3)$ is defined by
$$
\tr_{\bigwedge_{\nu=1}^Z L^2(\R^3)} \left(\Gamma \sum_{\nu=1}^Z U(x_\nu)\right) = \int_{\R^3} \rho(x) U(x) \,dx
\qquad\text{for all}\ U\in L^\infty(\R^3) \,.
$$
In our result, we consider a sequence of ground states $\Gamma$, depending on $Z = \kappa c$, along a sequence of $Z$'s diverging to infinity. The existence of ground states in the neutral case is well known \cite{Lewisetal1997}, but there is no reason to expect ground states to be unique. Our result is valid for any choice of ground states (and even for approximated ground states in a certain sense).

The main result of our paper \cite{Franketal2020P} with Siedentop and Simon was the convergence
\begin{equation}
	\label{eq:strongscott}
	\lim_{Z\to\infty} \int_{\R^3} c^{-3} \rho(x/c) u(|x|)\,dx = \int_{\R^3} \one_{(-\infty,0)}(C_\kappa)(x,x) u(|x|)\,dx
\end{equation}
for all $\kappa\in(0,2/\pi)$ and a certain class of test functions $u$. In particular, all bounded and compactly supported functions $u$ are admissible.

The meaning of the asymptotics \eqref{eq:strongscott} is that the probability density function of finding one of the $Z$ electrons at a position $x\in\R^3$, when rescaled by $c^{-1}$, converges to the probability density of finding one out of infinitely many electrons at $x$ in a relativistic atom where the electron-electron interactions are not taken into account.

This is a relativistic analogue of the strong Scott conjecture, which was formulated by Lieb \cite{Lieb1981} in 1981 and proved by Iantchenko, Lieb and Siedentop \cite{Iantchenkoetal1996} in 1996. The strong Scott asymptotics for atoms described by the relativistic Furry operator (the atomic many-particle Dirac--Coulomb operator defined in the totally antisymmetric tensor product of $Z$ copies of the positive spectral subspace of the single-particle Dirac--Coulomb operator $D_\nu$) is proved in \cite{MerzSiedentop2022}.

A key ingredient in our proof of \eqref{eq:strongscott} was the pointwise upper bound on the density $\one_{(-\infty,0)}(C_\kappa)(x,x)$. We emphasize that the proof of this bound is much more delicate than in the nonrelativistic case. Indeed, in the latter case the density is bounded near the origin, whereas in the nonrelativistic case a singularity emerges with an exponent that depends on the coupling constant $\kappa$. In the present paper we find a direct and robust way to derive this bound using heat kernel analysis and we believe that these ideas could also help to simplify other parts of the proof of \cite{Franketal2020P} and possibly to extend it to $\kappa=2/\pi$.


\subsection*{Structure of this paper}
	In the next section, we prove the small-distance bound in Theorem \ref{main}, even for a more general class of operators. In Section~\ref{sec:angmom} we state and prove corresponding bounds in each angular momentum channel and we prove the large-distance bound in Theorem \ref{main}. Finally, in Section \ref{s:dirac} we prove Theorem \ref{main2} about the Dirac--Coulomb operator.


\subsection*{Acknowledgments}
	This work grew out of discussions we had with Barry Simon and Heinz Siedentop while working on \cite{Franketal2020P} and we would like to take this opportunity to dedicate the present paper to Barry and to wish him a happy birthday. RLF acknowledges countless fruitful interactions with Barry and continued support throughout the years. Thank you!
	
	We are grateful to Heinz Siedentop for his constant encouragement and discussions. KM would like to thank Tomasz Jakubowski, Kamil Kaleta, and Karol Szczypkowski for discussions and, in particular, Krzysztof Bogdan for his hospitality during stays at Politechnika Wroc\l awska, where part of this work was done.


\section{Optimal Small-distance bounds}


\subsection{A generalization}

We are able to prove an optimal small-distance bound for a more general class of operators than \eqref{eq:defchandra}. For any dimension $d\in\N$ and any order $0<\alpha<d$, we consider the operator
\begin{align}
	C_\kappa := (1-\Delta)^{\alpha/2} - 1 - \kappa \, |x|^{-\alpha} \quad \text{in} \ L^2(\R^d) \,.
\end{align}
This is a bounded perturbation of the homogeneous operator
\begin{align}\label{eq:deflkappa}
	L_\kappa := (-\Delta)^{\alpha/2} - \kappa \, |x|^{-\alpha} \quad \text{in} \ L^2(\R^d) \,.
\end{align}
By the sharp Hardy--Kato--Herbst inequality \cite{Hardy1920,Kato1966,Herbst1977} (see also \cite{Kovalenkoetal1981,Yafaev1999,FrankSeiringer2008}), viz.,
\begin{align}
	\label{eq:hardy}
	L_0 \geq \kappa_{\rm c}^{(\alpha)}(d) \ |x|^{-\alpha}
\end{align}
with
\begin{align}
	\label{eq:defgammacellequalzero}
	\kappa_{\rm c}^{(\alpha)}(d) := \frac{2^{\alpha} \Gamma \left((d+\alpha)/4\right)^2}{\Gamma \left((d-\alpha)/4\right)^2} \,,
\end{align}
the operator $L_\kappa$, and hence $C_\kappa$, is bounded from below for all $\kappa\leq\kappa_{\rm c}^{(\alpha)}(d)$. In particular, this allows us to define $C_\kappa$ as the Friedrichs extension of the corresponding quadratic form on $C_c^\infty(\R^d)$ whenever $\kappa\leq\kappa_{\rm c}^{(\alpha)}(d)$.

%

To state our result, we introduce the function
\begin{align}
	\label{eq:defsigmagammal}
	\Phi_d^{(\alpha)}: (-\alpha,d) \to \R \,,
	\quad \eta \mapsto 
	\Phi_{d}^{(\alpha)}(\eta) 
	:= \frac{2^{\alpha} \Gamma\left(\frac{\eta+\alpha}{2}\right) \Gamma \left(\frac{d-\eta}{2}\right)}{\Gamma \left(\frac{\eta}{2}\right) \Gamma \left(\frac{d-\eta-\alpha}{2}\right)} \,.
\end{align}
It is symmetric around $\eta=(d-\alpha)/2$, where it attains its maximal value
\begin{align}
	\label{eq:defgammacell}
	\kappa_{\rm c}^{(\alpha)}(d) = \Phi_{d}^{(\alpha)}\left(\frac{d-\alpha}{2}\right).
\end{align}
Note also that $\Phi_{d}^{(\alpha)}(0)=0$, $\lim_{\eta\to-\alpha}\Phi_d^{(\alpha)}(\eta)=-\infty$ and that the map $\eta\mapsto\Phi_{d}^{(\alpha)}(\eta)$ is monotonously increasing for $\eta\in(-\alpha,(d-\alpha)/2]$; see \cite{Franketal2008H}. Thus, for each $\kappa\leq\kappa_{\rm c}^{(\alpha)}(d)$, there is a unique $-\alpha<\eta \leq (d-\alpha)/2$ such that $\kappa=\Phi_{d}^{(\alpha)}(\eta)$.

In this section we shall prove the following bound.

\begin{theorem}
	\label{boundrho}
	Let $d\in\N$, $\alpha\in(0,2\wedge d)$ and $\eta\in(0,(d-\alpha)/2]$, and set $\kappa :=\Phi_d^{(\alpha)}(\eta)$. Then, for all $x\in\R^d\setminus\{0\}$, 
	\begin{align}
		\label{eq:boundrho}
		\one_{(-\infty,0)}(C_\kappa)(x,x)
		\lesssim_{d,\alpha,\eta}
		\left(1\wedge |x|\right)^{-2\eta}.
	\end{align}
\end{theorem}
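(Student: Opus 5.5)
The plan is to bound the spectral projection by a heat kernel and then use known sharp heat kernel bounds for the homogeneous operator $L_\kappa$ with Hardy potential. Since $C_\kappa = L_\kappa + \big((1-\Delta)^{\alpha/2} - (-\Delta)^{\alpha/2} - 1\big)$ and the bracket is a bounded operator (a Fourier multiplier bounded uniformly in $\xi$), one has, for $t>0$, the elementary pointwise inequality
$$
\one_{(-\infty,0)}(C_\kappa)(x,x) \leq e^{t\|C_\kappa^-\|}\,\ldots
$$
More precisely, I would use $\one_{(-\infty,0)}(C_\kappa) \leq e\, e^{-tC_\kappa}$ for $t\le 1/\inf\spec$-type choice; simplest is $\one_{(-\infty,0)}(C_\kappa)\le e^{-tC_\kappa}$ as operators, valid for every $t>0$. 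Since the diagonal of a positive operator dominated by another positive operator is dominated pointwise (for continuous kernels, or a.e.), we get
$$
\one_{(-\infty,0)}(C_\kappa)(x,x)\le e^{-tC_\kappa}(x,x).
$$
Next, I compare $e^{-tC_\kappa}$ with $e^{-tL_\kappa}$. The multiplier $m(\xi)=(1+|\xi|^2)^{\alpha/2}-1-|\xi|^\alpha$ is bounded; for $0<\alpha<2$, $(1+|\xi|^2)^{\alpha/2}-1$ is a Bernstein function of $|\xi|^2$ and so is $|\xi|^\alpha$, and the difference $|\xi|^\alpha - ((1+|\xi|^2)^{\alpha/2}-1)$ is the symbol of a bounded generator whose Lévy measure is the difference of the two Lévy measures, which is positive (the relativistic stable Lévy density is the stable one times an exponentially decaying factor $\le 1$). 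Hence $e^{-t(\sqrt{\cdot}\text{-relativistic})}$ is a subordinated/perturbed kernel: writing $(1-\Delta)^{\alpha/2}-1 = (-\Delta)^{\alpha/2} - B$ with $B$ a positivity preserving bounded operator with $\|B\|\le c$, the Trotter/Dyson expansion gives $0\le e^{-tC_\kappa}(x,y)\le e^{ct}e^{-tL_\kappa}(x,y)$ pointwise. (If positivity of $B$ fails to be clean, I would instead use $e^{-tC_\kappa}(x,y)\le e^{ct}$ times the kernel with potential via Feynman--Kac and a domination of the relativistic stable process by the stable one with a killing, which is the same statement.)

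Then I invoke the sharp heat kernel bounds for the fractional Laplacian with Hardy potential $L_\kappa$, $\kappa=\Phi_d^{(\alpha)}(\eta)$, $0<\eta\le (d-\alpha)/2$ (Bogdan--Grzywny--Jakubowski--Pilarczyk, and for the critical case the subsequent work of Jakubowski et al.):
$$
e^{-tL_\kappa}(x,y)\asymp \Big(1+\tfrac{t^{\eta/\alpha}}{|x|^{\eta}}\Big)\Big(1+\tfrac{t^{\eta/\alpha}}{|y|^{\eta}}\Big)\, p_t(x-y),
$$
with $p_t$ the $\alpha$-stable kernel, $p_t(0)\asymp t^{-d/\alpha}$. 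Setting $x=y$ and $t=1$ yields $e^{-L_\kappa}(x,x)\lesssim (1+|x|^{-\eta})^2\lesssim (1\wedge|x|)^{-2\eta}$, so combining, $\one_{(-\infty,0)}(C_\kappa)(x,x)\le e^{c}e^{-L_\kappa}(x,x)\lesssim(1\wedge|x|)^{-2\eta}$, which is the claim.

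The main obstacle is the rigorous comparison step: making sense of pointwise diagonal values (continuity of kernels away from the origin) and justifying the domination $e^{-tC_\kappa}\le e^{ct}e^{-tL_\kappa}$ for the Friedrichs extensions, especially at the critical coupling $\eta=(d-\alpha)/2$ where form cores are delicate. I would handle this by approximating $\kappa|x|^{-\alpha}$ by truncated potentials $\min(\kappa|x|^{-\alpha},n)$, proving the kernel domination there via the Lévy-measure positivity argument and Trotter product formula, and passing to the limit by monotone convergence of forms, which preserves pointwise kernel inequalities; the diagonal bound is then obtained via the inequality for $\int f(x)\one_{(-\infty,0)}(C_\kappa)(x,y)f(y)$ with nonnegative $f$ concentrating at $x$ (using eigenfunction continuity away from $0$).
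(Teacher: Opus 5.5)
Your overall route is the paper's: the operator inequality $\one_{(-\infty,0)}(C_\kappa)\le e^{-tC_\kappa}$ reduces the density to a heat-kernel diagonal. The kernel comparison $e^{-tC_\kappa}(x,y)\le e^{ct}e^{-tL_\kappa}(x,y)$ is proved first for $\kappa=0$. It is then transferred to $\kappa>0$ by truncating the potential and using Trotter's formula and monotone convergence of forms. Finally, the Bogdan--Grzywny--Jakubowski--Pilarczyk two-sided bound for $e^{-tL_\kappa}$ is evaluated at $x=y$, $t=1$; that bound already covers the critical case $\eta=(d-\alpha)/2$. Step 1, Step 3 and the truncation/limit argument are fine.

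The step that fails as written is the free comparison $\kappa=0$.
\begin{itemize}
\item \emph{$B$ is not positivity preserving.} The operator $B$ has symbol $\psi(\xi)=1+|\xi|^\alpha-(1+|\xi|^2)^{\alpha/2}\in[0,1]$. Let $j_0\ge j_1$ be the stable and relativistic L\'evy densities. Then $\rho:=j_0-j_1\ge0$ has total mass $1$, because $\psi(\xi)\to1$ as $|\xi|\to\infty$. Hence $B=1-\rho\ast$, which sends a nonnegative compactly supported $g$ to $-\rho\ast g<0$ off its support. So $B$ is a positive operator, not a positivity-preserving one.
\item \emph{The inequality would go the wrong way anyway.} If $B$ did have a nonnegative kernel, every term of the Dyson/Trotter expansion of $e^{-t(L_0-B)}$ would be nonnegative. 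That gives the lower bound $e^{-t(L_0-B)}\ge e^{-tL_0}$. The condition $\|B\|\le c$ yields no pointwise upper bound.
\item \emph{Correct bookkeeping.} Write $C_0=(L_0+\rho\ast)-1$, where $L_0+\rho\ast=(1-\Delta)^{\alpha/2}$ generates a positivity-preserving semigroup. Duhamel's formula, $e^{-t(L_0+\rho\ast)}=e^{-tL_0}-\int_0^t e^{-(t-s)(L_0+\rho\ast)}\,\rho\ast\,e^{-sL_0}\,ds$, then gives $e^{-tC_0}(x,y)\le e^{t}e^{-tL_0}(x,y)$.
\item \emph{The paper's argument.} The paper does this more directly, and it is what your parenthetical fallback amounts to. By Bernstein's theorem, $e^{-t(\lambda+1)^{\alpha/2}}=\int_0^\infty e^{-\tau\lambda}e^{-\tau}\,d\mu_t(\tau)$ with $\mu_t\ge0$. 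Since $e^{\tau\Delta}(x,y)\ge0$ and $e^{-\tau}\le1$, one gets $e^{-t(1-\Delta)^{\alpha/2}}(x,y)\le e^{-t(-\Delta)^{\alpha/2}}(x,y)$, i.e.\ $c=1$.
\end{itemize}
With that replacement your proof coincides with the paper's.
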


Here and in what follows, we use the notation $A\wedge B:=\min\{A,B\}$ for $A,B\in\R$. Moreover, we write $A\lesssim B$ for two non-negative quantities $A,B\geq 0$ to indicate that there is a constant $C>0$ such that $A\leq C B$. If $C=C_\tau$ depends on a parameter $\tau$, we sometimes write $A\lesssim_\tau B$. The notation $A\sim B$ means $A\lesssim B\lesssim A$.

\begin{remark}
	\label{rem:boundrho}
	(a) This theorem, for $d=3$ and $\alpha=1$, gives the new bound for $|x|\leq 1$ in Theorem \ref{main}. In fact, note that $\Phi_d^{(1)}(\eta) = (1-\eta)\tan\frac{\pi\eta}{2}$ and $\kappa_{\rm c}^{(1)}(3)=2/\pi$.\\
	(b) For any $d,\alpha$ and $\eta$ our bound in Theorem \ref{boundrho} is optimal in the regime $|x|\leq 1$. Indeed, Jakubowski, Kaleta, and Szczypkowski \cite[Theorem~1.3]{Jakubowskietal2023} show that the square of the ground state of $C_\kappa$ is bounded from below by a constant times $|x|^{-2\eta}$ for $|x|\leq 1$.\\
	(c) Jakubowski, Kaleta, and Szczypkowski \cite[Theorem~1.1]{Jakubowskietal2023} also show that the square of any eigenfunction corresponding to a negative eigenvalue is bounded from above by a constant times $|x|^{-2\eta}$ for $|x|\leq 1$. We show that this holds even for the square sum of \emph{all} eigenfunctions corresponding to negative eigenvalues.\\
\end{remark}


\subsection{Proof of Theorem \ref{boundrho}}

We divide the proof of Theorem \ref{boundrho} into three steps.

\medskip

\emph{Step 1.} Let $\delta_x$ be the $d$-dimensional delta distribution, centered at $x\in\R^d\setminus\{0\}$, and let $\gamma_\kappa := \one_{(-\infty,0)}(C_{\kappa})$. Then, by the spectral theorem,  
\begin{align}
	\begin{split}
		\one_{(-\infty,0)}(C_{\kappa})(x,x)
		& = \tr(\one_{(-\infty,0)}(C_{\kappa}) \delta_x \one_{(-\infty,0)}(C_{\kappa})) \\
		& \leq \|\one_{(-\infty,0)}(C_{\kappa}) \me{\frac t2 C_{\kappa}}\|^2 \, \tr \me{-\frac t2 C_{\kappa}} \delta_x \me{-\frac t2 C_{\kappa}}\\
		& = \|\one_{(-\infty,0)}(C_{\kappa}) \me{\frac t2 C_{\kappa}}\|^2 \cdot \me{-tC_{\kappa}}(x,x) \\
		& \leq \me{-tC_{\kappa}}(x,x) \,.
	\end{split}
\end{align}
This bound reduces matters to proving a heat kernel bound for $C_\kappa$.

\medskip

\emph{Step 2.} In this step we further reduce matters to proving a heat kernel bound for the homogeneous operator $L_\kappa$, defined in \eqref{eq:deflkappa}.

\medskip

\emph{Step 2a.} We begin with the case $\kappa=0$ and claim that, for all $x,y\in\R^d$,
\begin{align}
	\label{eq:heatrelativisticaux1}
	\me{-tC_0}(x,y) & \leq \me{t} \cdot \me{-tL_0}(x,y) \,.
\end{align}

Indeed, since for every $t>0$ the function $\lambda\mapsto e^{-t\lambda^{\alpha/2}}$ is completely monotone, by Bernstein's theorem there is nonnegative measure $\mu_t$ on $[0,\infty)$ such that
$$
\int_0^\infty e^{-\tau\lambda} \,d\mu_t(\tau) = e^{-t\lambda^{\alpha/2}}
\qquad\text{for all}\ \lambda\geq 0 \,.
$$
Replacing here $\lambda$ by $\lambda+1$, we also see that
$$
\int_0^\infty e^{-\tau\lambda} e^{-\tau} \,d\mu_t(\tau) = e^{-t(\lambda+1)^{\alpha/2}}
\qquad\text{for all}\ \lambda\geq 0 \,.
$$
Since $e^{\tau\Delta}(x,y)\geq 0$ and $e^{-\tau}\leq 1$ it follows that
\begin{align*}
	e^{-t(-\Delta+1)^{\alpha/2}}(x,y) & = \int_0^\infty e^{\tau\Delta}(x,y) e^{-\tau} \,d\mu_t(\tau) \\
	& \leq  \int_0^\infty e^{\tau\Delta}(x,y) \,d\mu_t(\tau) \\
	& = e^{-t(-\Delta)^{\alpha/2}}(x,y) \,.
\end{align*}
This is the claimed inequality.

\medskip

\emph{Step 2b.} We turn to the case $\kappa\in(0,\kappa_c^{(\alpha)}(d)]$ and claim that, for all $x,y\in\R^d$,
\begin{align}
	\label{eq:heatrelativistic1}
	\begin{split}
		\me{-t C_{\kappa}}(x,y) \leq \me{t} \cdot \me{-tL_\kappa}(x,y) \,.
	\end{split}
\end{align}

To prove this, let $\epsilon>0$ and set $W(x):= \kappa \min\{ \epsilon^{-\alpha},|x|^{-\alpha}\}$. Then, according to the bound from Step 1 we have, for any $n\in\N$,
$$
\left( \me{- \frac t{2n} C_{0}} \me{\frac tn W} \me{- \frac t{2n} C_{0}} \right)^n (x,y) \leq \me{t} \cdot \left( \me{- \frac t{2n} L_{0}} \me{\frac tn W} \me{- \frac t{2n} L_{0}} \right)^n (x,y) \,.
$$
We take $0\leq f,g \in L^2(\R^d)$ and integrate this bound against $f(x)g(y)$. Taking the limit $n\to\infty$, we obtain, by Trotter's product formula
$$
\langle f, \me{-t(C_0-W)} g \rangle \leq \me{t} \,\langle f, \me{-t(L_0-W)} g \rangle \,.
$$
By the maximum principle, we have
$$
\langle f, \me{-t(L_0-W)} g \rangle \leq \langle f, \me{-t L_\kappa} g \rangle \,.
$$
Meanwhile, by the monotone convergence theorem for quadratic forms \cite[Theorem S.16]{ReedSimon1972}, we have
$$
\lim_{\epsilon\to 0_+} \langle f, \me{-t(C_0-W)} g \rangle = \langle f, \me{-tC_\kappa} g \rangle \,.
$$
Thus, we obtain
$$
\langle f, \me{-tC_\kappa} g \rangle \leq \me{t} \,\langle f, \me{-t L_\kappa} g \rangle \,.
$$
Since $f$ and $g$ are arbitrary and since we know that the heat kernels are continuous in $(\R^d\setminus\{0\})\times(\{0\}\times\R^d)$ \cite{Bogdanetal2019,Jakubowskietal2023}, we obtain the claimed inequality \eqref{eq:heatrelativistic1}.

\medskip

\emph{Step 3.} It is now easy to complete the proof. We recall that, according to \cite{Bogdanetal2019}, for any $\alpha\in(0,2\wedge d)$, $\eta\in(0,(d-\alpha)/2]$ and $x,y\in\R^d$ we have
\begin{align}
	\label{eq:heathardy1}
	\me{-tL_\kappa}(x,y)
	\!\sim_{d,\alpha,\eta}\! \left(1\wedge\frac{|x|}{t^{1/\alpha}}\right)^{-\eta} \left(1\wedge\frac{|y|}{t^{1/\alpha}}\right)^{-\eta} \cdot \me{-tL_0}(x,y)
\end{align}
where, as before, $\kappa=\Phi_d^{(\alpha)}(\eta)$. Moreover, for the operator $L_0$, we have
\begin{align*}
	\me{-tL_0}(x,y) \sim_{d,\alpha} \frac{t}{(t+|x-y|)^{d+\alpha}}.
\end{align*}
(In passing, we note that similar bounds in the case $\eta\in(-\alpha,0)$ corresponding to $\kappa<0$ are proved in \cite{JakubowskiWang2020,Choetal2020}.) 

Thus, combining \eqref{eq:heathardy1} with the bounds from Steps 1 and 2b, we arrive at
$$
\one_{(-\infty,0)}(C_{\kappa})(x,x) \lesssim_{d,\alpha,\eta} \frac{\me{t}}{t^{d/\alpha}} \cdot \left(1\wedge\frac{|x|}{t^{1/\alpha}}\right)^{-2\eta}.
$$
Choosing $t=1$ completes the proof of Theorem \ref{boundrho}.\qed


\section{Electrons with prescribed angular momentum}\label{sec:angmom}

A fundamental idea in the proof of the Scott correction in \cite{Hughes1986,SiedentopWeikard1987O} was to consider electrons with given, fixed angular momentum. This technique was used in \cite{Iantchenkoetal1996} and \cite{Franketal2020P} to prove the strong Scott conjecture in the nonrelativistic and the relativistic case, respectively. The suboptimal bound on the density $\one_{(-\infty,0)}(C_\kappa)(x,x)$ in \cite{Franketal2020P} (see Remark~\ref{rem:discussionmain}) was also proved by deriving bounds on the operators appearing in the angular momentum decomposition and then summing.

In contrast, our proof of Theorem \ref{boundrho} completely avoided this technique and worked directly on $\R^d$. Nevertheless, it is interesting to discuss the analogue of Theorem \ref{boundrho} in each angular momentum channel and to see the dependence of the singularity at the origin on the angular momentum. This is the content of Theorem \ref{boundrhoell} below. In the second subsection we will use the bounds at fixed angular momentum to prove the large distance bound in Theorem \ref{main}.


\subsection{Optimal small-distance bounds at fixed angular momentum}

Before stating the result, we introduce some notation. Let $L_d=\N_0$ for $d\geq2$ and $L_1=\{0,1\}$, and for $\ell\in L_d$, called \emph{angular momentum}, we let $\mathcal H_\ell$ denote the subspace of $L^2(\bs^{d-1})$ formed by spherical harmonics of degree $\ell$. We have
$$
L^2(\R^d) = \bigoplus_{\ell\in L_d} L^2(\R_+,r^{d-1}\,dr) \otimes\mathcal H_\ell \,.
$$
The spherical symmetry of $C_\kappa$ and $L_\kappa$ allows us to decompose these operators accordingly, yielding radial operators in $L^2(\R_+,r^{d-1}\,dr)$. For technical reasons it is slightly more convenient to consider the resulting radial operators not in $L^2(\R_+,r^{d-1}\,dr)$, but rather in $L^2(\R_+,r^{d_\ell-1}\,dr)$, where
$$
d_\ell := d+ 2\ell \,.
$$
We consider the unitary operator $U_\ell:L^2(\R_+,r^{d-1}\,dr)\to L^2(\R_+,r^{d_\ell-1}\,dr)$ by $f\mapsto (Uf)(r):=r^{-\ell}f(r)$. Then we have
\begin{align}
	C_\kappa & = \bigoplus_{\ell\in L_d}U_\ell^* C_{\kappa,\ell} U_\ell \otimes \one_{\mathcal H_\ell}, \\
	L_\kappa & = \bigoplus_{\ell\in L_d}U_\ell^* L_{\kappa,\ell} U_\ell \otimes \one_{\mathcal H_\ell}
\end{align}
for certain operators $C_{\kappa,\ell}$ and $L_{\kappa,\ell}$ in $L^2(\R_+,r^{d_\ell-1}\,dr)$. Note that $C_{\kappa,\ell}=C_{0,\ell}-\kappa/r^\alpha$ and $L_{\kappa,\ell}=L_{0,\ell}-\kappa/r^\alpha$. We emphasize that this decomposition is valid for all $0<\alpha<d$, although we will later have to impose an additional upper bound on $\alpha$.

Sharp Hardy--Kato--Herbst inequalities for the operator $L_{\kappa,\ell}$ were shown by in~\cite{LeYaouancetal1997} (for $d=3$ and $\alpha=1$) and~\cite{Yafaev1999}; see also \cite{BogdanMerz2024} for corresponding ground state representations. They state that for $d\in\N$, $\ell\in L_d$, and $\alpha\in(0,d_\ell)$ one has
\begin{align}
	\label{eq:hardyl}
	L_{0,\ell} \geq \kappa_{\rm c}^{(\alpha)}(d_\ell) \ r^{-\alpha} \,,
\end{align}
where the sharp constant $\kappa_{\rm c}^{(\alpha)}(d_\ell)$ is defined in \eqref{eq:defgammacell}. These constants satisfy
\begin{align}
  \kappa_{\rm c}^{(\alpha)}(d_{\ell+1})> \kappa_{\rm c}^{(\alpha)}(d_{\ell}) > \ldots > \kappa_{\rm c}^{(\alpha)}(d_{0}) = \kappa_{\rm c}^{(\alpha)}(d) \quad\text{for}\ \ell \in\N,
\end{align}
for $d\geq 2$, as well as $\kappa_{\rm c}^{(\alpha)}(d_1) > \kappa_{\rm c}^{(\alpha)}(d_0)$ when $d=1$. This follows from the fact that $\Phi^{(\alpha)}_d(\eta)> \Phi^{(\alpha)}_{d-1}(\eta)$ for all $d\geq 2$ and $\eta\in(-\alpha,(d-1-\alpha)/2)$ (see \cite[Proposition 1.4]{BogdanMerz2024}), combined with the monotonicity of $\Phi^{(\alpha)}_d$.

As a consequence, the operators $L_{\kappa,\ell}$, which are initially defined only for $\kappa\leq \kappa_{\rm c}^{(\alpha)}(d)$ and $\alpha<d$, can in fact be defined for $\kappa\leq \kappa_{\rm c}^{(\alpha)}(d_\ell)$ and $\alpha<d_\ell$ and are nonnegative. Similarly, the operators $C_{\kappa,\ell}$ are lower semibounded for $\kappa\leq \kappa_{\rm c}^{(\alpha)}(d_\ell)$ and $\alpha<d_\ell$.

The following result is the analogue of Theorem \ref{boundrho} for the operators $C_{\kappa,\ell}$.

\begin{theorem}
	\label{boundrhoell}
	Let $d\in\N$, $\ell\in L_d$, $\alpha\in(0,2\wedge d_\ell)$ and $\eta\in(0,(d_\ell-\alpha)/2]$, and set $\kappa :=\Phi_{d_\ell}^{(\alpha)}(\eta)$. Then, for all $r>0$,
	\begin{align}
		\label{eq:boundrhoell}
		\one_{(-\infty,0)}(C_{\kappa,\ell})(r,r)
		\lesssim_{d_\ell,\alpha,\eta} \left(1\wedge r\right)^{-2\eta}.
	\end{align}
\end{theorem}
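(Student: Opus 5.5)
We follow the three steps of the proof of Theorem~\ref{boundrho}, now carried out in the Hilbert space $L^2(\R_+,r^{d_\ell-1}\,dr)$. In \emph{Step 1}, the spectral theorem gives, exactly as before,
$$
\one_{(-\infty,0)}(C_{\kappa,\ell})(r,r) \le \me{-tC_{\kappa,\ell}}(r,r)
\qquad \text{for all } t>0 \,,
$$
where kernels are taken with respect to the measure $r^{d_\ell-1}\,dr$.

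In \emph{Step 2} we prove $\me{-tC_{\kappa,\ell}}(r,s)\le \me{t}\,\me{-tL_{\kappa,\ell}}(r,s)$.
\begin{itemize}
\item For $\kappa=0$ we use the Bernstein representation from Step~2a. This requires that $\me{-\tau C^{(2)}_{0,\ell}}$ have a nonnegative kernel, where $C^{(2)}_{0,\ell}$ is the radial part of $-\Delta$ in channel $\ell$.
\item After conjugation by $U_\ell$, the operator $-\Delta$ in channel $\ell$ becomes $-\partial_r^2-\frac{d_\ell-1}{r}\partial_r$. This is the radial Laplacian in ``dimension'' $d_\ell$, i.e.\ a Bessel operator. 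Its heat kernel is the explicit positive Bessel heat kernel $(2\tau)^{-1}(rs)^{-(d_\ell-2)/2}\,\me{-(r^2+s^2)/4\tau}\,I_{(d_\ell-2)/2}(rs/2\tau)$.
\item Subordination then gives $\me{-t(1-\partial^2_{d_\ell})^{\alpha/2}}\le \me{-t(-\partial^2_{d_\ell})^{\alpha/2}}$ pointwise, with $\partial^2_{d_\ell}$ denoting this Bessel operator.
\item For $\kappa>0$ the Trotter/truncation/monotone convergence argument of Step~2b transfers verbatim. It needs only positivity-preservation of $\me{-tL_{0,\ell}}$ and multiplication by $\me{\frac tn W}$, and the lower semiboundedness from \eqref{eq:hardyl} (note $\kappa\le\kappa_{\rm c}^{(\alpha)}(d_\ell)$).
\item To pass from the weak inequality to a pointwise one we need continuity of the kernels off $r=0$. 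This should follow from the heat kernel bounds below, or we can work with averaged versions and take $f,g$ approximating deltas at $r$.
\end{itemize}

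In \emph{Step 3}, the main point and the expected obstacle, we need the analogue of \eqref{eq:heathardy1} for $L_{\kappa,\ell}$:
$$
\me{-tL_{\kappa,\ell}}(r,s)\sim \Bigl(1\wedge\frac{r}{t^{1/\alpha}}\Bigr)^{-\eta}\Bigl(1\wedge\frac{s}{t^{1/\alpha}}\Bigr)^{-\eta}\me{-tL_{0,\ell}}(r,s),
$$
with $\kappa=\Phi^{(\alpha)}_{d_\ell}(\eta)$. When $d_\ell$ is an integer, $L_{0,\ell}$ is exactly the restriction of $(-\Delta)^{\alpha/2}$ on $\R^{d_\ell}$ to radial functions. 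Hence $\me{-tL_{\kappa,\ell}}(r,s)$ equals the spherical average over $\omega\in\Sph^{d_\ell-1}$ of $\me{-tL_\kappa^{\R^{d_\ell}}}(r e_1,s\omega)$. Applying \eqref{eq:heathardy1} in dimension $d_\ell$ (allowed since $\alpha<2\wedge d_\ell$ and $\eta\le(d_\ell-\alpha)/2$) gives the required two-sided bound, and we only need the upper bound. This covers all $\ell$ since $d_\ell=d+2\ell$ is an integer; the case $d=1$, $\ell=1$ works the same way with $d_1=3$.

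At $s=r$ the averaged kernel is bounded by $\sup_\omega \me{-tL_0^{\R^{d_\ell}}}(re_1,r\omega)\lesssim t^{-d_\ell/\alpha}$. We therefore obtain $\me{-tL_{\kappa,\ell}}(r,r)\lesssim t^{-d_\ell/\alpha}\,(1\wedge r t^{-1/\alpha})^{-2\eta}$. Choosing $t=1$ and combining with Steps~1 and~2 yields \eqref{eq:boundrhoell}.

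The delicate points are two. First, we must verify that the identification of the channel-$\ell$ operator with the $d_\ell$-dimensional radial operator holds at the level of the Friedrichs extensions, including the critical case $\kappa=\kappa_{\rm c}^{(\alpha)}(d_\ell)$. Second, the normalization of kernels with respect to $r^{d_\ell-1}dr$ must match the spherical average, with the constant $|\Sph^{d_\ell-1}|$ absorbed into $\lesssim$.
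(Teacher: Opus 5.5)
Your proposal is correct. Steps 1 and 2 coincide with the paper's; the difference is in Step 3.

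\textbf{What the paper does in Step 3.} It cites the fixed-angular-momentum estimate \eqref{eq:heathardy2} directly from the radial heat kernel literature. It combines this with the explicit two-sided bound \eqref{eq:radialheatkernel} for $\me{-tL_{0,\ell}}$, which is obtained by subordination.

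\textbf{What you do instead.} You use the dimension shift: after $r^{-\ell}$-conjugation, the channel-$\ell$ operator on $\R^d$ is the radial part of the same operator on $\R^{d_\ell}$ (both are Hankel transforms of order $\ell+\frac{d-2}{2}$). You then average the full-space estimate \eqref{eq:heathardy1} in dimension $d_\ell$ over $\Sph^{d_\ell-1}$. The weights in \eqref{eq:heathardy1} depend only on $|x|$ and $|y|$, so they factor out of the spherical integral. Your route therefore reproduces the two-sided bound \eqref{eq:heathardy2} for integer $d_\ell$ and $\eta>0$ from the full-space result alone.

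\textbf{What each route buys.} Yours relies on $d_\ell$ being an integer (automatic here) and on matching the self-adjoint realizations. That matching is routine: rotational averaging maps a form core to a form core of the radial part. The paper implicitly uses the same identification anyway, when it takes $\kappa_{\rm c}^{(\alpha)}(d_\ell)$ as the channel-$\ell$ Hardy constant. The paper's citation works intrinsically on the half-line. It also supplies the full off-diagonal and large-$r$ shape \eqref{eq:radialheatkernel}, which the paper reuses in proving \eqref{eq:lbound}. For the present theorem your crude on-diagonal bound $t^{-d_\ell/\alpha}$ suffices.

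\textbf{A shortcut your identification gives.} You do not need to redo Steps 1 and 2 at the channel level. $C_{\kappa,\ell}$ is the $\ell=0$ component of $C_\kappa^{\R^{d_\ell}}$. Write \eqref{eq:boundrhoellsummation} in dimension $d_\ell$; all its terms are nonnegative. Hence $\one_{(-\infty,0)}(C_{\kappa,\ell})(r,r)\le|\Sph^{d_\ell-1}|\,\one_{(-\infty,0)}(C_\kappa^{\R^{d_\ell}})(x,x)$ for $|x|=r$. So the theorem is a direct corollary of Theorem \ref{boundrho} in dimension $d_\ell$, whose hypotheses are exactly the ones assumed here.
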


\begin{remark}
	(a) The theorem shows that for fixed $\kappa$, $\alpha$ and $d$ the singularity at the origin becomes weaker as $\ell$ increases. This follows from the fact that the solutions $\eta=\eta_\ell$ of $\kappa=\Phi^{(\alpha)}_{d_\ell}(\eta)$ are decreasing with respect to $\ell$ as a consequence of the monotonicity properties of $\Phi^{(\alpha)}_d(\eta)$.\\ 
	(b) For $\ell=0$ and any $d$, $\ell$, $\alpha$ and $\eta$ our bound in Theorem \ref{boundrhoell} is optimal in the regime $r\leq 1$. Indeed, the square of the ground state of $C_{\kappa,\ell}$ is bounded from below by a constant times $r^{-2\eta}$ for $r\leq 1$. This follows from \cite[Theorem 1.3]{Jakubowskietal2023} in view of the fact that the ground state of $C_\kappa$ is radial. We believe that the same optimality statement holds for any $\ell$. This would follow if one could prove lower heat kernel bounds for $\me{-tC_{\ell,\kappa}}$, matching those in \eqref{eq:heatrelativistic2} below.\\
	(c) A drawback of Theorem \ref{boundrhoell} is that the dependence of the implicit constant on $\ell$ is not controlled. Note that
	\begin{align}
		\label{eq:boundrhoellsummation}
		\one_{(-\infty,0)}(C_\kappa)(x,x)
		= \sum_{\ell\in L_d}\frac{\dim\mathcal H_\ell}{|\bs^{d-1}|}\, |x|^{2\ell} \, \one_{(-\infty,0)}(C_{\kappa,\ell})(|x|,|x|)
	\end{align}
	where (see, e.g., \cite[(2.46)]{AveryAvery2018})
	$$
	\dim\mathcal H_\ell=\frac{(d+2\ell-2)(d+\ell-3)!}{\ell!(d-2)!} \,.
	$$
	Our strategy in \cite{Franketal2020P} was to bound each term in \eqref{eq:boundrhoellsummation} and to sum the bounds with respect to $\ell$. The bounds we obtained were weaker for small $r$, but they had an explicit $\ell$-dependence.
\end{remark}

\begin{proof}[Proof of Theorem \ref{boundrhoell}]
	We follow closely the three steps in the proof of Theorem \ref{boundrho}. We note that the assumption $\alpha<2$ guarantees the nonnegativity of all the heat kernels involved.
	
	In the same way as in Step 1 there, we note that
	\begin{align}
		\begin{split}
			\one_{(-\infty,0)}(C_{\kappa,\ell})(r,r)
			\leq  \me{-tC_{\kappa,\ell}}(r,r) \,.
		\end{split}
	\end{align}
	
	Next, in the same way as in Step 2 there, we prove
	\begin{align}
		\label{eq:heatrelativistic2}
		\begin{split}
			\me{-t C_{\kappa,\ell}}(r,s) \leq \me{t} \cdot \me{-tL_{\kappa,\ell}}(r,s).
		\end{split}
	\end{align}
	Indeed, this is first proved for $\kappa=0$ by using the subordination argument and then extended to $\kappa>0$ by Trotter's product formula.
	
	To carry out the analogue of Step 3 there, we need the heat kernel bounds from 	\cite{Bogdanetal2024,BogdanMerz2025H}. They state that for $\ell\in L_d$, $\alpha\in(0,2\wedge d_\ell)$, $\eta\in(-\alpha,(d_\ell-\alpha)/2]$, and $\kappa=\Phi_{d_\ell}^{(\alpha)}(\eta)$ we have
	\begin{align}
		\label{eq:heathardy2}
		\me{-tL_{\kappa,\ell}}(r,s)
		\sim_{d_\ell,\alpha,\eta} \left(1\wedge \frac{r}{t^{1/\alpha}}\right)^{-\eta} \cdot \left(1\wedge \frac{s}{t^{1/\alpha}}\right)^{-\eta} \cdot \me{-tL_{0,\ell}}(r,s) \,.
	\end{align}
	Moreover, for the operator $L_{0,\ell}$, we have
	\begin{align}
		\label{eq:radialheatkernel}
		\begin{split}
			\me{-tL_{0,\ell}}(r,s)
			& \sim_{d_\ell,\alpha} \frac{t}{|r-s|^{1+\alpha}(r+s)^{d_\ell-1} + t^{\frac{1+\alpha}{\alpha}}(t^{\frac1\alpha}+r+s)^{d_\ell-1}}.
		\end{split}
	\end{align}
	Formula~\eqref{eq:radialheatkernel} can be derived using subordination and sharp bounds for the subordination density function, see, e.g., \cite[Theorem~2.1]{BogdanMerz2025}. See also Grzywny--Trojan \cite{GrzywnyTrojan2021} for a streamlined argument yielding the upper bound in \eqref{eq:radialheatkernel}, and also \cite{PensonGorska2010} and \cite[p.~136]{Betancoretal2010}.
	
	Thus, combining the above bounds, we arrive at
	\begin{align}
		\one_{(-\infty,0)}(C_{\kappa,\ell})(r,r)
		\lesssim_{d_\ell,\alpha,\eta} \left(1\wedge\frac{r}{t^{1/\alpha}}\right)^{-2\eta}\cdot\frac{\me{t}}{t^{d_\ell/\alpha}}.
	\end{align}
	Choosing $t=1$ completes the proof of Theorem \ref{boundrhoell}.
\end{proof}


\subsection{Proof of Theorem \ref{main}}

In the remainder of this section we assume that $d=3$ and $\alpha=1$ and we complete the proof of Theorem \ref{main}.

Also, we will change the notation slightly, in order to be consistent with \cite{Franketal2020P}. From now on, we use the notation $C_{\kappa,\ell}$ and $L_{\kappa,\ell}$ for the operators in the \emph{unweighted} space $L^2(\R_+)$ (with measure $dr$), that is, the conjugate of the operators denoted previously by this symbol under the unitary map $L^2(\R_+,r^{d_\ell-1}dr) \ni f\mapsto r^{(d_\ell-1)/2}f \in L^2(\R_+)$.

\begin{theorem}
	\label{chandral}
	Let $0<\kappa\leq 2/\pi$. Then there are constants $A_\kappa,L_\kappa<\infty$ such that for all $\ell\in\N_0$ with $\ell\geq L_\kappa$ and all $r\in\R_+$,
	\begin{align*}
		\one_{(-\infty,0)}(C_{\kappa,\ell})(r,r) & \leq A_\kappa \left( \frac{r}{(\ell+\frac12)^4} \, \one_{r\leq (\ell+\frac12)^{3/2}} + \frac{r^3}{(\ell+\frac12)^7} \, \one_{(\ell+\frac12)^{3/2}< r\leq (\ell+\frac12)^2} \right. \\
		& \quad \qquad \left. + \frac{1}{\ell+\tfrac12} \, \one_{r> (\ell+\frac12)^2} \right).
	\end{align*}
	Moreover, for $\ell\in\N_0$ with $\ell< L_\kappa$ and all $r\in\R_+$, we have
	$$
	\one_{(-\infty,0)}(C_{\kappa,\ell})(r,r) \leq A_\kappa \left( r^{2+2\ell-2\eta_\kappa(\ell)} \, \one_{r\leq 1} + \one_{r>1} \right),
	$$
	where $\eta_\kappa(\ell)$ is the unique number in $(0,1+\ell]$ such that $\Phi_{3+2\ell}^{(1)}(\eta_\kappa(\ell)) = \kappa$. 
\end{theorem}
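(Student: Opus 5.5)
The proof splits into the two regimes of the statement: finitely many angular momenta $\ell<L_\kappa$, and the tail $\ell\geq L_\kappa$. In both I would exploit that only the channel $\ell=0$ is critical at $\kappa=2/\pi$. Indeed, $\kappa_{\rm c}^{(1)}(3+2\ell)>\kappa_{\rm c}^{(1)}(3)=2/\pi$ for all $\ell\geq1$, and $\kappa_{\rm c}^{(1)}(3+2\ell)\to\infty$ as $\ell\to\infty$.

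\emph{Small $\ell$, small $r$.} Conjugating by $f\mapsto r^{(d_\ell-1)/2}f=r^{1+\ell}f$ multiplies the diagonal kernel by $r^{d_\ell-1}=r^{2+2\ell}$. Theorem~\ref{boundrhoell} with $d=3$, $\alpha=1$ then gives
$$
\one_{(-\infty,0)}(C_{\kappa,\ell})(r,r)\lesssim r^{2+2\ell-2\eta_\kappa(\ell)}\qquad\text{for } r\leq 1,
$$
which is the claimed bound there. For $r>1$ the same argument is useless: it only yields $r^{2+2\ell}$, because choosing $t\sim r$ in Step~1 produces the factor $\me{t}$ from \eqref{eq:heatrelativistic2}.

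\emph{Bound $\lesssim 1$ for $r>1$, $\ell<L_\kappa$.} For $\kappa<2/\pi$ I would quote the corresponding bound of \cite{Franketal2020P}; the real work is the case $\kappa=2/\pi$.
\begin{itemize}
\item For $\ell\geq1$ the channel is subcritical even at $\kappa=2/\pi$. I would rerun the argument of \cite{Franketal2020P} in that channel, observing that its proof only uses $\kappa<\kappa_{\rm c}^{(1)}(d_\ell)$ inside the channel.
\item For $\ell=0$ I would localize with a smooth partition $\chi_0^2+\chi_1^2=1$, with $\chi_0$ supported in $r\leq 2$. On the support of $\chi_1$ the Coulomb singularity is absent. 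The IMS-type localization error for $\sqrt{1-\Delta}$ is bounded, by the standard commutator estimate for the relativistic kinetic energy. Hence, on the support of $\chi_1$, the eigenfunctions with eigenvalue in $(-1,0)$ behave like those of a regular subcritical problem. Concretely, I would show that $\chi_1\one_{(-\infty,0)}(C_{\kappa,0})\chi_1$ is dominated on the diagonal by the density of a channel operator whose potential is $\kappa'/r$ with $\kappa'<2/\pi$ near the origin and $\kappa/r$ for $r\geq 1$. The subcritical bound of \cite{Franketal2020P} then applies to this operator.
\end{itemize}

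\emph{Large $\ell$.} The three regimes $r/(\ell+\frac12)^4$, $r^3/(\ell+\frac12)^7$ and $1/(\ell+\frac12)$ are the nonrelativistic hydrogen profiles at angular momentum $\ell$. In \cite{Franketal2020P} they were derived for $\kappa<2/\pi$ by comparing with $-\frac12\partial_r^2+\ell(\ell+1)/(2r^2)-\kappa/r$, using kinetic-energy lower bounds whose constants depend on $\kappa$ only through $\kappa/\kappa_{\rm c}^{(1)}(3+2\ell)$. For $\ell\geq L_\kappa$ this ratio is at most $1/2$, uniformly in $\kappa\leq2/\pi$. So I would choose $L_\kappa$ so that $\kappa\leq\frac12\kappa_{\rm c}^{(1)}(3+2\ell)$ for all $\ell\geq L_\kappa$, and check that every step of the large-$\ell$ proof in \cite{Franketal2020P} goes through with constants uniform up to and including $\kappa=2/\pi$.

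\emph{Main obstacle.} I expect the hardest step to be the $\ell=0$, $r>1$ bound at the critical coupling. Pointwise densities are not monotone under operator inequalities, so the localization step must be carried out at the level of kernels. I would first try to do this via Step~1 applied to a localized heat kernel, using a Duhamel expansion around a subcritical operator that differs from $C_{\kappa,0}$ only near the origin.
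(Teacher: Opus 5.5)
\textbf{There is a genuine gap.}

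The parts that match the paper are fine. Your small-$r$ step (Theorem~\ref{boundrhoell} after conjugating by $r^{1+\ell}$) is exactly what the paper does. Your large-$\ell$ plan is in the same spirit as the paper's argument, which factors through $C_{0,\ell}+a(\ell+\frac12)^{-2}$ and uses Hardy's inequality $\kappa r^{-1}\le\kappa(\ell+\frac12)^{-1}p_\ell$, so that $\kappa= 2/\pi$ causes no trouble once $\ell$ is large.

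The gap is the bound $\one_{(-\infty,0)}(C_{\kappa,\ell})(r,r)\lesssim 1$ for $r>1$ and $\ell<L_\kappa$ at $\kappa=2/\pi$. You flag this yourself as the main obstacle, and it is not proved.
\begin{itemize}
\item The claimed step, that $\chi_1\one_{(-\infty,0)}(C_{\kappa,0})\chi_1$ is dominated on the diagonal by the negative spectral projection of a subcritical operator $\tilde C$, has no mechanism behind it. Spectral projections of different operators are not ordered, and IMS localization only gives quadratic-form inequalities.
\item The natural rigorous comparison, $\gamma(r,r)\le\|(\tilde C+a)\gamma\|^2(\tilde C+a)^{-2}(r,r)$ with $\gamma=\one_{(-\infty,0)}(C_{\kappa,0})$, also fails. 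At criticality the ground state in the channel $\ell=0$ is of order $r^{0}$ at the origin (in the $dr$-normalization). Hence $(\tilde C-C_{\kappa,0})\gamma$, which equals $(\kappa-\kappa')r^{-1}\gamma$ near the origin, is unbounded.
\item Your treatment of $\ell\geq 1$ at $\kappa=2/\pi$ rests on an unverified claim about the internal structure of the proof in \cite{Franketal2020P}.
\end{itemize}

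The missing idea is the factorization $\gamma(r,r)\le\|X\gamma\|^2\,X^{-2}(r,r)$. It replaces the spectral projection by a resolvent power whose diagonal can be estimated. You need a comparison operator $X$ with the \emph{same} Coulomb singularity, so that $X\gamma$ stays bounded even at criticality. The paper takes $X=L_{\kappa,\ell}+a$.
\begin{itemize}
\item Since $C_{\kappa,\ell}-L_{\kappa,\ell}$ is bounded, $\|(L_{\kappa,\ell}+a)(C_{\kappa,\ell}+a)^{-1}\|<\infty$, so $X\gamma$ is bounded.
\item One has $(L_{\kappa,\ell}+a)^{-2}(r,r)=\int_0^\infty t\,\me{-ta}\,\me{-tL_{\kappa,\ell}}(r,r)\,dt$. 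This is $\lesssim 1$ uniformly in $r$ by the two-sided heat kernel bounds \eqref{eq:heathardy2}, which hold up to and including $\eta=(d_\ell-\alpha)/2$. The borderline $\eta=1+\ell$ is exactly what the integral can afford.
\end{itemize}
This handles every $\ell<L_\kappa$ and every $\kappa\le 2/\pi$ at once. It also removes the $\me{t}$ loss you noticed for $r>1$: the comparison between $C_{\kappa,\ell}$ and $L_{\kappa,\ell}$ now sits in the bounded factor, not at the level of heat kernels.

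Your localization idea can in fact be made to work, but only inside this factorization. For $r\geq1$ write $\delta_r=\chi_1\delta_r$. Take $X=C_{0,\ell}+a-V$ with $V$ bounded and equal to $\kappa r^{-1}$ on the support of $\chi_1$. Then $X\chi_1\gamma=\chi_1(C_{\kappa,\ell}+a)\gamma+[C_{0,\ell},\chi_1]\gamma$ is bounded, and $X^{-2}(r,r)\lesssim 1$. This is a comparison of resolvents, not of spectral projections, and it needs no subcritical operator at all.
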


Before proving this proposition, let us show how it implies the large-distance part of Theorem \ref{main}.

\begin{proof}[Proof of Theorem \ref{main}]
	As remarked after Theorem \ref{boundrho}, the latter theorem implies the bound in Theorem \ref{main} for $|x|\leq 1$. 
	
	For $|x|>1$ we use the analogue of \eqref{eq:boundrhoellsummation}, which in our present convention for the operators $C_{\kappa,\ell}$ reads	
	\begin{align}
		\label{eq:boundrhoellsummation2}
		\one_{(-\infty,0)}(C_\kappa)(x,x)
		= |x|^{-2} \sum_{\ell\in\N_0}\frac{(2\ell+1)}{4\pi}\, \one_{(-\infty,0)}(C_{\kappa,\ell})(|x|,|x|) \,.
	\end{align}
	Inserting the bound from Theorem \ref{chandral}, we find
	\begin{align*}
		\one_{(-\infty,0)}(C_\kappa)(x,x) & \lesssim_\kappa |x|^{-2} + \sum_{\ell+\frac12\geq |x|^{2/3}} \frac{|x|^{-1}}{(\ell+\frac12)^3} \\
		& \qquad + \sum_{|x|^{2/3} < \ell+\frac12 \leq |x|^{1/2}} \frac{|x|}{(\ell+\frac12)^6}
		+ \sum_{\ell+\frac12 > |x|^{1/2}} |x|^{-2} \,.
	\end{align*}
	The first term on the right side comes from angular momenta $\ell<L_\kappa$. It is $\leq |x|^{-3/2}$. Elementary considerations show that the last two sums are each bounded by $|x|^{-3/2}$, which is the desired quantity, while the remaining sum is $\lesssim |x|^{-7/3}\leq |x|^{-3/2}$. This proves the claimed bound.
\end{proof}

We prove Theorem \ref{chandral} separately for large and small $\ell$.

\begin{proof}[Proof of Theorem \ref{chandral} for large $\ell$]
	We set $a_\ell:=a(\ell+1/2)^{-2}$ with a constant $a>0$ to be determined (independently of $\ell$). We write
	$$
	\one_{(-\infty,0)}(C_{\kappa,\ell})(r,r) = \tr A^*B^*CBA
	$$
	with
	\begin{align*}
		A & := (C_{\kappa,\ell}+a_\ell) \, \one_{(-\infty,0)}(C_{\kappa,\ell}) \,,\\
		B &:= (C_{0,\ell}+a_\ell) \, (C_{\kappa,\ell}+a_\ell)^{-1} \,,\\
		C & := (C_{0,\ell}+a_\ell)^{-1} \delta_r (C_{0,\ell}+a_\ell)^{-1} \,.
	\end{align*}
	These operators are well-defined if
	$$
	a_\ell > |\inf\spec C_{\kappa,\ell}| \,.
	$$
	Since the right side is $\lesssim_\kappa (\ell+1/2)^{-2}$ according to \cite{Franketal2008}, this is satisfied for all sufficiently large $a>0$.
	
	We have
	$$
	\one_{(-\infty,0)}(C_{\kappa,\ell})(r,r) \leq \|A\|^2 \|B\|^2 \tr C
	$$
	and we need to bound the terms on the right side.
	
	Clearly, we have
	$$
	\|A\|\leq a_\ell \lesssim (\ell+\tfrac12)^{-2} \,.
	$$
	Below, we will show that, fixing $a$ sufficiently large, we have for all sufficiently large $\ell$
	$$
	\|B\|\lesssim 1 \,.
	$$
	Finally, from \cite[Lemma A.2]{Franketal2020P} we know that
	$$
	\Tr C \lesssim r \one_{r\leq (\ell+\frac12)^{3/2}} + \Big(  \frac{r}{\ell+\frac12} \Big)^3 \one_{(\ell+\frac12)^{3/2}< r\leq (\ell+\frac12)^2} + (\ell+\tfrac12)^3\one_{r> (\ell+\frac12)^2} \,.
	$$
	Combining these bounds, we obtain the bound claimed in the proposition.
	
	Thus, we are left with estimating $\|B\|$ uniformly in $\ell$. For $\ell\in\N_0$, let 
	\begin{equation}
		\label{eq:defpl}
			p_\ell := \sqrt{-\frac{d^2}{dr^2}+\frac{\ell(\ell+1)}{r^2}} \quad \text{in} \ L^2(\R_+) \,.
	\end{equation}
	For any $\psi\in\dom C_{\kappa,\ell}$ we have, by Hardy's inequality,
	\begin{align*}
		\| (C_{0,\ell}+a_\ell)\psi\| & \leq \| (C_{\kappa,\ell}+a_\ell)\psi\| + \kappa \| r^{-1} \psi\| \\
		& \leq \| (C_{\kappa,\ell}+a_\ell)\psi\| + \kappa (\ell+\tfrac12)^{-1} \| p_\ell \psi\| \,.
	\end{align*}
	To bound the last term on the right side, we use the bound
	$$
	P \leq C A^{-1/2} \left( \sqrt{P^2+1} -1 + A \right)
	\qquad\text{for all}\ P\geq 0 \,,\ A\leq 1\,.
	$$
	(This can be proved by considering separately the cases $P\leq A^{1/2}$, $A^{1/2}<P\leq 1$ and $P>1$.) Applying this inequality with $A=a_\ell$ we obtain
	$$
	\kappa (\ell+\tfrac12)^{-1} \| p_\ell \psi\| \leq C \kappa a^{-1/2} \| (C_{0,\ell}+a_\ell)\psi\| \,.
	$$
	By choosing $a= (2C\kappa)^2$ we find that
	$$
	\| (C_{0,\ell}+a_\ell)\psi\| \leq 2 \| (C_{\kappa,\ell}+a_\ell)\psi\| \,,
	$$
	which shows that $\|B\|\leq 2$. The choice $A=(2C\kappa)^2(\ell+1/2)^{-2}$ satisfies $A\leq 1$ provided $\ell$ is sufficiently large.	
\end{proof}

\begin{proof}[Proof of Theorem \ref{chandral} for small $\ell$]
	We give the proof for arbitrary $\ell\in\N_0$, but without controlling the $\ell$-dependence. With a constant $a$ we write
	$$
	\one_{(-\infty,0)}(C_{\kappa,\ell})(r,r) = \tr A^*B^*CBA
	$$
	with
	\begin{align*}
		A & := (C_{\kappa,\ell}+a) \, \one_{(-\infty,0)}(C_{\kappa,\ell}) \,,\\
		B &:= (L_{\kappa,\ell}+a) \, (C_{\kappa,\ell}+a)^{-1} \,,\\
		C & := (L_{\kappa,\ell}+a)^{-1} \delta_r (L_{\kappa,\ell}+a)^{-1} \,.
	\end{align*}
	These operators are well-defined if
	$$
	a> |\inf\spec C_{\kappa,\ell}| \,,
	$$
	which can be achieved by a suitable choice of $a$ (depending on $\kappa$ and $\ell$).
	
	We have
	$$
	\one_{(-\infty,0)}(C_{\kappa,\ell})(r,r) \leq \|A\|^2 \|B\|^2 \tr C
	$$
	and we need to bound the terms on the right side.
	
	Clearly, we have
	$$
	\|A\|\leq a <\infty \,.
	$$
	The operator $B$ is bounded since the difference $L_{\kappa,\ell} - C_{\kappa,\ell}$ is a bounded operator, so the operator domains of $C_{\kappa,\ell}$ and $L_{\kappa,\ell}$ coincide. In the remainder of this proof, we shall show
	\begin{equation}
		\label{eq:lbound}
			\tr C = (L_{\kappa,\ell}+a)^{-2}(r,r) \lesssim_{a,\ell} 1 \,.
	\end{equation}
	Once this is shown, we obtain $\one_{(-\infty,0)}(C_{\kappa,\ell})(r,r)\lesssim_{\ell} 1$. Meanwhile, Theorem \ref{boundrhoell} in the present normalization says that, abbreviating $\eta=\eta_\kappa(\ell)$,
	$$
	\one_{(-\infty,0)}(C_{\kappa,\ell})(r,r)\lesssim_{\ell} r^{2+2\ell} (1\wedge r)^{-2\eta} \,.
	$$
	Combining these two bounds yields the bound $\lesssim \min\{ r^{2+2\ell-2\eta},1\}$. To obtain the bound claimed in Theorem \ref{chandral} we use the fact that $\eta\leq 1+\ell$.
	Indeed, by monotonicity of $\Phi_{3_\ell}^{(1)}$, the latter inequality is equivalent to $\kappa = \Phi_{3_\ell}^{(1)}(\eta) \leq \Phi_{3_\ell}^{(1)}(1+\ell) = \kappa_{\rm c}^{(1)}(3_\ell)$. This is satisfied, since $\kappa \leq 2/\pi = \kappa_{\rm c}^{(1)}(3) \leq \kappa_{\rm c}^{(1)}(3_\ell)$ for all $\ell\in\N_0$.
		
	It remains to prove \eqref{eq:lbound}. By the functional calculus and the heat kernel estimate \eqref{eq:heathardy2}, taking into account that now we are working with the measure $dr$ rather than $r^{3_\ell-1}\,dr$ where $3_\ell =3+2\ell$,
	\begin{align}
		\label{eq:resolventzeroangaux}
		\begin{split}
			& (L_{\kappa,\ell}+a)^{-2}(r,r)
			= \int_0^\infty dt\, t \, \me{-t(L_{\kappa,\ell}+a)}(r,r) \\
			& \quad \sim_\ell r^{3_\ell-1} \int_0^\infty dt\, t\, \left(1\wedge \frac{r}{t}\right)^{-2\eta} \cdot \frac{t}{t^2(t+r)^{3_\ell-1}} \cdot \me{-ta} \\
			& \quad = r \int_0^\infty dt\, \left(1\wedge \frac{1}{t}\right)^{-2\eta} \cdot \frac{1}{(t+1)^{3_\ell-1}} \cdot \me{-a r t}.
		\end{split}
	\end{align}
	Suppose first $a r\leq1$. Then the right side is bounded from above (and below) by constants times
	\begin{align}
		\label{eq:auxint1}
		\begin{split}
			& r \left(\int_0^1 dt + \int_1^{1/(a r)}dt\, t^{2\eta-3_\ell+1}  + \int_{1/(a r)}^\infty dt\, t^{2\eta-3_\ell+1} \me{-a r t} \right) \\
			& \quad \sim r \left( \one_{2\eta-3_\ell+2<0} + \ln\left(1+\frac1{ar}\right) \one_{2\eta-3_\ell+2=0} + (ar)^{-2\eta + 3_\ell-2} \one_{2\eta-3_\ell+2>0} \right) \\
			& \quad \lesssim_a 1 \,.
		\end{split}
	\end{align}
	In the last inequality we used $1-2\eta+3_\ell-2\geq 0$, which is equivalent to the inequality $\eta\leq 1+\ell$ shown above.
	
	On the other hand, for $a r\geq1$ the right side of \eqref{eq:resolventzeroangaux} is bounded from above (and below) by constants times
	\begin{align}
		\label{eq:auxint2}
		\begin{split}
			& r \left(\int_0^{1/(a r)}dt + \int_{1/(a r)}^1 dt\, \me{-a r t} + \int_1^\infty dt\, t^{2\eta - 3_\ell +1} \me{-a r t}\right) \lesssim_{a} 1 \,.
		\end{split}
	\end{align}
	
	Combining the bounds for $ar\leq 1$ and $ar\geq 1$, we arrive at \eqref{eq:lbound}.
\end{proof}


\section{Dirac--Coulomb operator}
\label{s:dirac}

The goal of this section is to prove Theorem \ref{main2}.


\subsection{Reduction to bounds for fixed angular momentum}

The spherical symmetry of $D_\nu$ allows for a partial wave decomposition. The relevant consequence of this decomposition for us is that, if we let
$$
D_{\nu,k} := \begin{pmatrix}
	1-\frac{\nu}{r} & -\frac{d}{dr} - \frac{k}{r} \\ \frac{d}{dr} - \frac{k}{r} & -1 - \frac{\nu}{r}
\end{pmatrix}
\qquad\text{in}\ L^2(\R_+:\C^2) \,,
$$
then we have 
\begin{equation}
	\label{eq:densitydecomp}
	\tr_{\C^4} \one_{[0,1)}(D_\nu)(x,x) = |x|^{-2} \sum_{k\in\Z\setminus\{0\}} \frac{|k|}{2\pi} \tr_{\C^2} \one_{[0,1)}(D_{\nu,k})(|x|,|x|) \,.
\end{equation}
Before sketching a proof of \eqref{eq:densitydecomp}, we will explain its use in the proof of Theorem \ref{main2}. We shall prove the following bounds on the square sum of the normalized eigenfunctions of the operators $D_{\nu,k}$.

\begin{theorem}
	\label{main2k}
	Let $0<\nu\leq 1$. Then there are constants $A_\nu,B_\nu<\infty$ such that for all $k\in\Z\setminus\{0\}$ and $r\in\R_+$,
	$$
	\tr_{\C^2} \one_{[0,1)}(D_{\nu,k})(r,r) \leq A_\nu \min\left\{ \left( \frac{B_\nu}{\gamma_k} \right)^{4\gamma_k} r^{2\gamma_k} , |k|^{-1} \right\}.
	$$
	Here $\gamma_k:=\sqrt{k^2-\nu^2}$.
\end{theorem}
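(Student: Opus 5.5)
The plan is to work directly with the explicitly known spectral data of $D_{\nu,k}$ and prove the two bounds inside the minimum separately.

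\emph{Spectral data.} Recall the Sommerfeld formula: the eigenvalues of $D_{\nu,k}$ in $[0,1)$ are
\[
E_{n,k}=\Bigl(1+\tfrac{\nu^2}{(n+\gamma_k)^2}\Bigr)^{-1/2},
\]
with $n\in\N_0$ for $k<0$ and $n\in\N$ for $k>0$. The corresponding normalized eigenfunctions have the form
\[
\psi_{n,k}(r)=N_{n,k}\,(2\lambda_n r)^{\gamma_k}\,\me{-\lambda_n r}\bigl(c_1 L^{(2\gamma_k)}_{n-1}(2\lambda_n r)\pm c_2 L^{(2\gamma_k)}_{n}(2\lambda_n r)\bigr),
\]
where $\lambda_n=\sqrt{1-E_{n,k}^2}=\nu/\sqrt{(n+\gamma_k)^2+\nu^2}$. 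Thus $\tr_{\C^2}\one_{[0,1)}(D_{\nu,k})(r,r)=\sum_n|\psi_{n,k}(r)|^2$, and everything reduces to uniform control of this series.

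\emph{Small-$r$ bound.} I will use the classical estimate $|L_n^{(a)}(x)|\le\binom{n+a}{n}\me{x/2}$ for $a\ge0$, $x\ge0$. This removes the exponential and gives
\[
|\psi_{n,k}(r)|^2\lesssim N_{n,k}^2\,\binom{n+2\gamma_k}{n}^2(2\lambda_nr)^{2\gamma_k}.
\]
The normalization constants $N_{n,k}^2$ are explicit ratios of Gamma functions, behaving like $\lambda_n\,n!/\Gamma(n+2\gamma_k+1)$ up to factors bounded in $n,k$. Substituting, the $n$-th term becomes roughly
\[
\lambda_n^{2\gamma_k+1}\,\frac{\Gamma(n+2\gamma_k+1)}{n!\,\Gamma(2\gamma_k+1)^2}\,r^{2\gamma_k}.
\]
Since $\lambda_n\sim\nu/(n+\gamma_k)$ and $\Gamma(n+2\gamma_k+1)/n!\lesssim(n+\gamma_k)^{2\gamma_k}$, the $n$-sum is controlled by $\sum_n(n+\gamma_k)^{-1}\cdot(\text{extra decay})$. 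The extra decay is a factor $\lambda_n^2$ coming from the normalization, as in the hydrogen case where $|\psi_n|^2\sim n^{-3}$. With it the sum converges uniformly. Stirling's formula applied to $\Gamma(2\gamma_k+1)^{-2}$ and to the powers of $2\nu$ then produces the prefactor $(B_\nu/\gamma_k)^{4\gamma_k}$. For the finitely many $k$ with $\gamma_k$ small (including $\gamma_{\pm1}=0$ when $\nu=1$), I would instead bound directly from the explicit formulas, absorbing the $k$-dependence into $A_\nu$.

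\emph{The $|k|^{-1}$ bound.} Here I mimic the proof of Theorem~\ref{chandral} for large $\ell$. Write the density as $\tr A^*B^*CBA$ with
\[
A=(D_{\nu,k}^2-1+a_k)\,\one_{[0,1)}(D_{\nu,k}),\qquad C=(h_k+a_k)^{-1}\delta_r(h_k+a_k)^{-1},
\]
where $a_k=a|k|^{-2}$, $h_k=-\frac{d^2}{dr^2}+\frac{k(k\pm1)}{r^2}$ acts componentwise, and $B$ is the comparison operator between the two resolvents. Then:
\begin{itemize}
\item $\|A\|\lesssim|k|^{-2}$, since $1-E_{n,k}^2\lesssim\gamma_k^{-2}$.
\item $\|B\|\lesssim1$ for $a$ large. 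This follows from the Hardy bound $\|r^{-1}\psi\|\lesssim|k|^{-1}\|p\psi\|$ exactly as before, after expanding $D_{\nu,k}^2-1$, which equals $h_k$ plus Coulomb terms $-2\nu/r$ and $\nu^2/r^2$ and the off-diagonal term $\nu r^{-2}\sigma$. All of these are relatively small compared with $h_k+a_k$ for large $|k|$.
\item $\tr C$ is bounded using \cite[Lemma A.2]{Franketal2020P}-type resolvent bounds, which give at most $(|k|)^3$ for large $r$.
\end{itemize}
Altogether this yields $|k|^{-4}\cdot|k|^3=|k|^{-1}$. Small $|k|$ are handled by the explicit bound from the previous step combined with a crude global bound.

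\emph{Main obstacle.} The main obstacle I expect is the uniformity in both $n$ and $k$ of the Laguerre and normalization estimates in the small-$r$ bound. In particular, the sharp constant $(B_\nu/\gamma_k)^{4\gamma_k}$ requires care with the Gamma-function asymptotics and with the interplay of the two Laguerre terms, whose coefficients $c_1,c_2$ depend on $n$, $k$ and $\nu$. The $\nu=1$ endpoint also needs care in both parts, since $\gamma_{\pm1}=0$ there.
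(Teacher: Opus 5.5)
\textbf{Small-$r$ bound.} As written, this part gives a divergent series. Apply Szeg\H{o}'s inequality separately to the two Laguerre terms and insert the normalization. You get exactly the term you display, which is $\sim \nu^{2\gamma_k}(n+\gamma_k)^{-1}\Gamma(2\gamma_k+1)^{-2}r^{2\gamma_k}$, and the sum over $n$ diverges logarithmically. At that point the normalization has been fully used. There is no further factor $\lambda_n^2$ ``coming from the normalization''.

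The $n^{-3}$ decay does hold at $r=0$. Getting it uniformly in $r$ needs two observations that your proposal lacks.
\begin{enumerate}
\item In the lower component the decay comes from the factor $1-E_{n,k}\le p_{n,k}^2$ in $(N^-_{n,k})^2$.
\item In the upper component $(N^+_{n,k})^2$ carries $1+E_{n,k}\approx 2$, so there is no decay there. Instead, the two Laguerre terms have coefficients $\nu/p_{n,k}+k$ and $n$, both of size $\sim n$, and they nearly cancel. A termwise Szeg\H{o} bound destroys this cancellation. The paper makes it explicit before applying Szeg\H{o}: by the contiguous relation, the combination equals $(\nu/p_{n,k}+k-n-2\gamma_k)\,{}_1F_1(-n,2\gamma_k+1,2p_{n,k}r)+2\gamma_k\,{}_1F_1(-n,2\gamma_k,2p_{n,k}r)$. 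Since $\nu/p_{n,k}-n=O(|k|)$, this gives $f^+_{n,k}\lesssim k^2$ instead of $(n+|k|)^2$.
\end{enumerate}
Your claim that the normalization is $\lambda_n n!/\Gamma(n+2\gamma_k+1)$ up to factors bounded in $n,k$ is also false for $k<0$. There $(N^\pm_{n,k})^2$ contains $(\nu+kp_{n,k})^{-1}$, which is of order $|k|$ at $n=1$. One needs the lower bound $\nu+kp_{n,k}\ge c_\nu\, n/(n+|k|)$, and the resulting $\log|k|$ must be absorbed into $B_\nu$.

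\textbf{The $|k|^{-1}$ bound.} Your operator-level comparison of $h_k+a_k$ with $D_{\nu,k}^2-1+a_k$ is a viable alternative to the paper only for large $|k|$. Your expansion is also incomplete. The anticommutator $\{D_{0,k},r^{-1}\}$ contains $2\beta/r$ and an off-diagonal first-order term $\pm 2r^{-1}\partial_r$, not just $\nu r^{-2}\sigma$. These are still relatively small with respect to $h_k+a_k$ when $|k|$ and $a$ are large, so that part is repairable.

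The real failure is at small $|k|$, decisively at $|k|=1$. There the eigenfunctions behave like $r^{\gamma_1}$ near $0$ with $\gamma_1\le 1<3/2$, so $h_k\psi_{n,k}\sim r^{\gamma_1-2}\notin L^2$. Hence $B=(h_k+a_k)\Gamma_{\nu,k}(D_{\nu,k}^2-1+a_k)^{-1}$ is unbounded for every choice of $a$.

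Your fallback, ``the explicit bound from the previous step combined with a crude global bound'', does not fill this gap. For $\nu<1$ the small-$r$ bound grows like $r^{2\gamma_k}$, and no uniform-in-$r$ bound for the $|k|=1$ channel is indicated. That uniform bound is exactly the nontrivial part.

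The paper uses a genuinely different ingredient for $|k|=1$. It takes the operator inequality $D_\nu^2\gtrsim_\nu(|p|-\kappa|x|^{-1})^2$ from Morozov--M\"uller. This allows comparison with $\tilde L_{\kappa,k}$ and use of the uniform resolvent bound \eqref{eq:lbound}, which in turn comes from the heat kernel estimates \eqref{eq:heathardy2}.

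For $|k|\ge 2$ the paper avoids your large-$|k|$ restriction by working at the form level. It uses the factors $(F_{\nu,k}+a_k)^{1/2}$ and $(H_{\nu,k}+a_k)^{1/2}$ together with the identity $-\tfrac12\Delta=(D_0-1)+\tfrac12(D_0-1)^2$. This closes uniformly for all $|k|\ge 2$ and $\nu\le 1$.
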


Theorem \ref{main2k} follows immediately from Propositions \ref{rhokdirac} and \ref{rhokdiracinfinity} below.

\begin{proof}[Proof of Theorem \ref{main2} given Theorem \ref{main2k}]
	With a parameter $\epsilon>0$ to be determined (independent of $k$), we use the first quantity in the minimum in Theorem \ref{main2k} for $r\leq \epsilon |k|^2$ and the second one for $r>\epsilon|k|^2$. Therefore, \eqref{eq:densitydecomp} gives
	\begin{align*}
		\tr_{\C^4} \one_{[0,1)}(D_\nu)(x,x) & \lesssim |x|^{-2} \sum_{|k|\geq (|x|/\epsilon)^{1/2}} |k| \left( \frac{B_\nu}{\gamma_k} \right)^{4\gamma_k} |x|^{2\gamma_k} \\
		& \quad + |x|^{-2} \sum_{|k|< (|x|/\epsilon)^{1/2}} 1 \,.
	\end{align*}
	For the second sum we obtain clearly
	$$
	|x|^{-2} \sum_{|k|< (|x|/\epsilon)^{1/2}} 1 \lesssim_\epsilon |x|^{-3/2} \one_{|x|>\epsilon} \,,
	$$
	which is the claimed long distance behavior.
	
	For the first sum we bound
	\begin{align*}
		 & |x|^{-2} \sum_{|k|\geq (|x|/\epsilon)^{1/2}} |k| \left( \frac{B_\nu}{\gamma_k} \right)^{4\gamma_k} |x|^{2\gamma_k} \\
		 & \quad \leq |x|^{-2+2\gamma_1} \sum_{|k|\geq (|x|/\epsilon)^{1/2}} |k| \left( \frac{B_\nu}{\gamma_k} \right)^{4\gamma_k} (\epsilon |k|^2)^{2(\gamma_k-\gamma_1)} \\
		 & \quad = |x|^{-2+2\gamma_1} \epsilon^{-2\gamma_1} \sum_{|k|\geq (|x|/\epsilon)^{1/2}} |k|^{1-4\gamma_1} \left( \frac{\sqrt\epsilon \, B_\nu \, |k|}{\gamma_k} \right)^{4\gamma_k}.
	\end{align*}
	We can fix $\epsilon>0$ so small that $\frac{\sqrt\epsilon \, B_\nu \, |k|}{\gamma_k} \leq \frac12$ for all $k\in\Z\setminus\{0\}$. This implies the convergence of the above series and, in fact,
	$$
	|x|^{-2+2\gamma_1} \epsilon^{-2\gamma_1} \sum_{|k|\geq (|x|/\epsilon)^{1/2}} |k|^{1-4\gamma_1} \left( \frac{\sqrt\epsilon \, B_\nu \, |k|}{\gamma_k} \right)^{4\gamma_k} \lesssim |x|^{-2+2\gamma_1} e^{-c_\nu |x|^{1/2}}
	$$
	for some $c_\nu>0$. Recalling that $-2+2\gamma_1 = -2\Sigma_\nu$, we have arrived at the bound from Theorem \ref{main2}.	
\end{proof}

For the sake of completeness, we briefly sketch a proof of the partial wave decomposition leading to \eqref{eq:densitydecomp}. This is described in detail in many textbooks, including \cite{Thaller1992,BalinskyEvans2011}, and here we mostly follow \cite[Section 8.6]{Thaller2005}. The starting point is the decomposition
\begin{equation}
	\label{eq:decompsphere}
	L^2(\Sph^2:\C^4) = \bigoplus_{k\in\Z\setminus\{0\}} \mathcal K_k \,,
\end{equation}
where $\mathcal K_k:= \ker(K-k)$ and $K:=\beta(2S\cdot L+1)$. Here $L := x\wedge(-i\nabla)$ and $S:=-\frac i4 \balpha\wedge\balpha$. Moreover, $\dim\mathcal K_k = 4|k|$. These facts can easily be obtained from the corresponding ones for $L^2(\Sph^2:\C^2)$, a selfcontained proof of which is provided in \cite[Appendix~B]{Dolbeaultetal2025}.

Since $\beta$ commutes with $K$ and has eigenvalues $\pm1$, we have
$$
\mathcal K_k = \mathcal K_k^+ \oplus \mathcal K_k^-
\qquad\text{with}\quad
\mathcal K_k^\pm := \mathcal K_k \cap \ker(\beta\mp 1) \,.
$$
It is easy to see that the matrix-multiplication operator $\balpha\cdot\omega$ (where $\omega$ denotes the independent variable on $\Sph^2$) is idempotent and maps $\mathcal K_k^+$ unitarily onto $\mathcal K_k^-$.

The decomposition \eqref{eq:decompsphere} implies that, with the unitary operator $U:L^2(\R^3:\C^4)\to L^2(\R^3:\C^4, |x|^{-2}\,dx)$, $\psi\mapsto |x|^{-1}\psi$,
$$
L^2(\R^3:\C^4) = U^* \Big( \bigoplus_{k\in\Z\setminus\{0\}} L^2(\R_+) \otimes \mathcal K_k \Big)U \,.
$$
The above recalled facts imply that for each $k\in\Z\setminus\{0\}$ one has
$$
L^2(\R_+) \otimes \mathcal K_k = L^2(\R_+:\C^2) \otimes \mathcal K_k^+ \,,
$$
where the space on the right is defined as
$$
L^2(\R_+:\C^2) \otimes \mathcal K_k^+ := \mathrm{span}\left\{ f^+ \Phi - i f^- \balpha\cdot\omega \Phi :\ f\in L^2(\R_+:\C^2),\ \Phi\in \mathcal K_k^+ \right\}.
$$
Here we write $f=\begin{pmatrix}
	f^+ \\ f^-
\end{pmatrix}$. The tensor product notation is justified since, for $f,g\in L^2(\R_+:\C^2)$ and $\Phi,\Psi\in\mathcal K_k^+$,
\begin{align*}
	& \langle f^+ \Phi - i f^- \balpha\cdot\omega \Phi, g^+ \Psi - i g^- \balpha\cdot\omega \Psi \rangle_{L^2(\R^3:\C^4,|x|^{-2}dx)} \\
	& = \langle f, g\rangle_{L^2(\R_+:\C^2)} \langle\Phi,\Psi\rangle_{L^2(\Sph^2:\C^4)} \,.
\end{align*}
Thus, we have
$$
L^2(\R^3:\C^4) = U^* \Big( \bigoplus_{k\in\Z\setminus\{0\}} L^2(\R_+:\C^2) \otimes \mathcal K_k^+ \Big)U \,.
$$

Since (see \cite[Eq.~(8.170)]{Thaller2005})
$$
D_\nu = -i(\balpha\cdot\omega) \left( \tfrac{\partial}{\partial r} + r^{-1} - r^{-1} \beta K \right) + \beta - \nu r^{-1} \,,
$$
we see that the Dirac--Coulomb operator is invariant with respect to this decomposition and that, for each $k$, it acts trivially on the $\mathcal K_k^+$ factor. More precisely, for $f\in L^2(\R_+:\C^2)$ and $\Phi\in\mathcal K_k^+$ we have
$$
U^*(f^+ \Phi - i f^- \balpha\cdot\omega \Phi) \in \dom D_\nu
\qquad\text{if and only if}\qquad
f\in \dom D_{\nu,k}
$$
and, in this case,
$$
U D_\nu U^*(f^+ \Phi - i f^- \balpha\cdot\omega \Phi) = (D_{\nu,k} f)^+ \Phi - i (D_{\nu,k}f)^- \balpha\cdot\omega\Phi \,.
$$

Note also that for $f\in L^2(\R_+:\C^2)$ and $\Phi\in\mathcal K_k^+$ we have
$$
|f^+ \Phi - i f^- \balpha\cdot\omega \Phi|^2 = \left( |f^+|^2 + |f^-|^2 \right) |\Phi|^2 = |f|^2 |\Phi|^2 \,,
$$
where we used the fact that $\Phi$ and $\balpha\cdot\omega\Phi$, which belong to different eigenspaces of $\beta$, are pointwise orthogonal. This, together with the above block-diagonali\-zation of $D_\nu$, implies that
$$
\tr_{\C^4} \one_{[0,1)}(D_\nu)(x,x) = |x|^{-2} \!\!\! \!\sum_{k\in\Z\setminus\{0\}} \!\! \left( \tr_{\C^2} \Pi_k^+(\tfrac{x}{|x|},\tfrac{x}{|x|}) \right) \tr_{\C^2} \one_{[0,1)}(D_{\nu,k})(|x|,|x|) \,,
$$
where $\Pi_k^+$ is the projection in $L^2(\Sph:\C^4)$ onto $\mathcal K_k^+$. By rotation invariance, $\Tr_{\C^2}\Pi_k^+(\omega,\omega)$ is independent of $\omega$ and since $\Tr \Pi_k^+ = \dim\mathcal K_k^+=2|k|$ we have
$$
\Tr_{\C^2} \Pi_k^+(\omega,\omega) = \frac{1}{4\pi} \int_{\Sph^2} \Tr_{\C^2} \Pi_k^+(\omega',\omega') \,d\omega' = \frac{\Tr \Pi_k^+}{4\pi} = \frac{2|k|}{4\pi} \,.
$$
Therefore, we arrive at the claimed identity \eqref{eq:densitydecomp}.


\subsection{Bounds for fixed angular momentum. I}

We now prove bounds on $\tr_{\C^2} \one_{[0,1)}(D_{\nu,k})(r,r)$ that are sharp near the origin.

\begin{proposition}
  \label{rhokdirac}
  Let $0<\nu\leq 1$. Then there are constants $A_\nu,B_\nu<\infty$ such that for all $k\in\Z\setminus\{0\}$ and $r\in\R_+$,
  $$
  \tr_{\C^2} \one_{[0,1)}(D_{\nu,k})(r,r) \leq A_\nu \left( \frac{B_\nu}{\gamma_k} \right)^{4\gamma_k} r^{2\gamma_k}.
  $$
\end{proposition}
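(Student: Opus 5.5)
The plan is to use the explicit eigenfunctions of $D_{\nu,k}$. The idea is to bound each normalized eigenfunction pointwise by its leading coefficient at the origin times $r^{\gamma_k}$, uniformly in $r$, and then to sum these coefficients over the principal quantum number. By the Sommerfeld formula, the eigenvalues of $D_{\nu,k}$ in $[0,1)$ are
$$
E_{n,k}=\Big(1+\tfrac{\nu^2}{(n+\gamma_k)^2}\Big)^{-1/2},
$$
with $n\in\N_0$ if $k<0$ and $n\in\N$ if $k>0$. Set $\lambda_{n}:=\sqrt{1-E_{n,k}^2}=\nu/\sqrt{(n+\gamma_k)^2+\nu^2}$. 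The two components of the corresponding normalized eigenfunction $f_{n,k}$ have the form
$$
f^\pm_{n,k}(r)=N_{n,k}\,(2\lambda_n r)^{\gamma_k}e^{-\lambda_n r}\Big(a^\pm_{n,k}L^{(2\gamma_k)}_{n-1}(2\lambda_n r)+b^\pm_{n,k}L^{(2\gamma_k)}_{n}(2\lambda_n r)\Big),
$$
with explicitly known coefficients (e.g.\ from the standard textbook formulas). Since
$$
\tr_{\C^2}\one_{[0,1)}(D_{\nu,k})(r,r)=\sum_n |f_{n,k}(r)|^2,
$$
the task reduces to a bound on each summand that is summable in $n$.

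\emph{Step 1 (pointwise bound).} I would use the classical inequality
$$
e^{-x/2}\,|L^{(a)}_m(x)|\le \binom{m+a}{m}\qquad (a\ge0,\ x\ge0).
$$
It applies also at $a=2\gamma_k=0$, which occurs for $\nu=1$, $|k|=1$. Applied with $x=2\lambda_n r$, it gives, for all $r>0$,
$$
|f_{n,k}(r)|^2\lesssim N_{n,k}^2\big(|a^\pm_{n,k}|+|b^\pm_{n,k}|\big)^2\binom{n+2\gamma_k}{n}^2(2\lambda_n r)^{2\gamma_k}.
$$
This is exactly of the form $c_{n,k}r^{2\gamma_k}$, so no case distinction in $r$ is needed.

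\emph{Step 2 (normalization bookkeeping).} Next I would insert the explicit normalization. I expect
$$
N_{n,k}^2\sim \lambda_n\cdot\frac{n!}{\Gamma(n+2\gamma_k+1)}\cdot\frac{1}{\text{(a factor of order } (n+|k|)\,(n+\gamma_k))},
$$
with the coefficients $a^\pm,b^\pm$ bounded by constants times $n+|k|$ in a way that leaves an extra decay $(n+\gamma_k)^{-1}$. This is consistent with the nonrelativistic value $|R_{n}(0)|^2\sim n^{-3}$. Then
$$
c_{n,k}\lesssim \frac{\Gamma(n+2\gamma_k+1)}{n!\,\Gamma(2\gamma_k+1)^2}\,\Big(\frac{2\nu}{n+\gamma_k}\Big)^{2\gamma_k}\frac{1}{(n+\gamma_k)^{2}}.
$$
Using $\Gamma(n+2\gamma_k+1)/n!\lesssim (n+2\gamma_k+1)^{2\gamma_k}$, the factor $\big((n+2\gamma_k+1)/(n+\gamma_k)\big)^{2\gamma_k}\le C^{\gamma_k}$ is harmless. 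Stirling gives $\Gamma(2\gamma_k+1)^{-2}\le (C/\gamma_k)^{4\gamma_k}$, with the usual interpretation when $\gamma_k=0$. Summing $\sum_n (n+\gamma_k)^{-2}\lesssim 1$ then yields $A_\nu (B_\nu/\gamma_k)^{4\gamma_k}r^{2\gamma_k}$.

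\emph{Main obstacle.} I expect the main obstacle to be Step 2: extracting, uniformly in both $n$ and $k$, the precise size of $N_{n,k}$ and of the mixing coefficients $a^\pm,b^\pm$. This is what decides whether the sum over $n$ converges. The crude count suggests the terms decay only like $n^{-1}$, and the essential extra factor $(n+\gamma_k)^{-1}$ must come from the relativistic normalization (the factor $1/(N(N+k))$-type term). I would first verify this via the closed-form norm integral $\int_0^\infty x^{2\gamma}e^{-x}L_m^{(2\gamma)}(x)^2\,dx=\Gamma(m+2\gamma+1)/m!$ together with the orthogonality of the two Laguerre pieces. Particular care is needed for the edge cases $n=0$ (only for $k<0$) and $\gamma_k\to0$ (only for $\nu=1$, $|k|=1$), where one of the Laguerre terms disappears.
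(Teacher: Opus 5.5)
Your overall route (explicit eigenfunctions, Szeg\H{o}'s inequality, Stirling for $\Gamma(2\gamma_k+1)^{-2}$) is the paper's. However, Steps 1--2 have a genuine gap. The extra decay you expect from the normalization does not exist, and the termwise bound of Step 1 produces a divergent series.

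Put $Q:=\nu/p_{n,k}=\sqrt{n^2+2n\gamma_k+k^2}$ (the $N$ in your $1/(N(N+k))$) and $G_{n,k}:=(2p_{n,k})^{2\gamma_k}\Gamma(n+2\gamma_k+1)/(\Gamma(2\gamma_k+1)^2\,n!)$. The normalization is completely explicit, $(N^\pm_{n,k})^2=\frac{p_{n,k}(1\pm E_{n,k})}{2Q(Q+k)}\,G_{n,k}$. For fixed $k$ and $n\to\infty$ the large component therefore carries $(N^+_{n,k})^2\sim\nu G_{n,k}n^{-3}$, and nothing more can be extracted from it. That component is $N^+_{n,k}\,r^{\gamma_k}e^{-p_{n,k}r}\bigl[(Q+k)\,{}_1F_1(-n,2\gamma_k+1,2p_{n,k}r)-n\,{}_1F_1(1-n,2\gamma_k+1,2p_{n,k}r)\bigr]$, and both coefficients are of size $n$. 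If you bound the two pieces separately with Szeg\H{o}, as you propose, the contribution is of order $G_{n,k}/n$. Since $G_{n,k}\to(2\nu)^{2\gamma_k}/\Gamma(2\gamma_k+1)^2$, the sum over $n$ diverges; this is exactly your ``crude count''. By contrast, at $r=0$ the bracket equals $Q+k-n=O(|k|)$, because the two terms nearly cancel.

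The missing idea is to expose this cancellation in a form compatible with a bound uniform in $r$. The paper does this with the contiguous relation DLMF (13.3.3), equivalently $L^{(2\gamma_k)}_{n-1}=L^{(2\gamma_k)}_n-L^{(2\gamma_k-1)}_n$:
\begin{align*}
&(Q+k)\,{}_1F_1(-n,2\gamma_k+1,z)-n\,{}_1F_1(1-n,2\gamma_k+1,z)\\
&\qquad=(Q+k-n-2\gamma_k)\,{}_1F_1(-n,2\gamma_k+1,z)+2\gamma_k\,{}_1F_1(-n,2\gamma_k,z).
\end{align*}
Here both coefficients are $O(|k|)$, since $Q-n\le\gamma_k+|k|$. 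Szeg\H{o}'s bound then gives the $n$-th term (both components together) as $\lesssim k^2G_{n,k}/((\nu+kp_{n,k})(n+|k|)^3)$, which is summable. The small component needs no such trick, because $1-E_{n,k}\le p_{n,k}^2$ compensates its coefficients of size $n+|k|$.

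Your guess for the normalization is also wrong for one sign of $k$. With the paper's $D_{\nu,k}$, $n=0$ occurs for $k>0$, so you have the two cases swapped. For $k<0$ one has $Q+k=Q-|k|=n(n+2\gamma_k)/(Q+|k|)$. Hence $1/(Q(Q+k))$ is of order $1/(n(n+\gamma_k))$, not $1/((n+|k|)(n+\gamma_k))$; this is the paper's lower bound $\nu+kp_{n,k}\gtrsim_\nu n/(n+|k|)$. The resulting sum is $\sum_{n\ge1}k^2/(n(n+|k|)^2)\lesssim 1+\ln|k|$, not $\sum_n(n+\gamma_k)^{-2}\lesssim1$. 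The logarithm is harmless only because it can be absorbed into $B_\nu^{4\gamma_k}$.

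Your claimed $c_{n,k}\lesssim G_{n,k}(n+\gamma_k)^{-2}$ also fails already for the ground state. Indeed, $|\psi_{0,k}(r)|^2=2p_{0,k}G_{0,k}\,r^{2\gamma_k}e^{-2p_{0,k}r}$ with $p_{0,k}=\nu/|k|$, and this exceeds your bound by a factor of order $|k|$ for large $|k|$. Again this is absorbable, but it shows that the Step 2 bookkeeping was guessed rather than derived.
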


For any $\nu$ and $k$, the bound in the proposition is best possible for $r\leq 1$. Indeed, the ground state of $D_{\nu,k}$ for $k>0$ is given by $r^{\gamma_k} \me{- (\nu/|k|) r}$ times a constant spinor. The ground state for $k<0$ is slightly more complicated, but shows the same behavior.

Our proof of Proposition \ref{rhokdirac} relies on the explicit expressions for the eigenvalues and eigenfunction of $D_\nu$. These are due to \cite{Darwin1928,Gordon1928,Pidduck1929} and can be found in many textbooks, for instance, in \cite[p.~427]{Thaller2005}. The eigenvalues of $D_{\nu,k}$ are
\begin{align}
  \label{eq:sommerfeldeigenvalues}
  E_{n,k} = \left(1+\frac{\nu^2}{(n+\gamma_k)^2}\right)^{-1/2},
  \quad\text{where}\ n\in\mathcal{N}_k := \begin{cases} \N_0 & \text{if}\ k>0 \,,\\ \N & \text{if}\ k<0 \,, \end{cases}
\end{align}
with corresponding $L^2(\R_+:\C^2)$-normalized eigenfunctions
\begin{align}
  \label{eq:sommerfeldeigenfunctions}
  \begin{split}
    \begin{pmatrix} \psi_{n,k}^+(r) \\ \psi_{n,k}^-(r)\end{pmatrix}
    & = r^{\gamma_k}\me{-p_{n,k}r} \left(\frac{\nu}{p_{n,k}}+k\right) \,\cdot\,  _1F_1\left(-n,2\gamma_k+1,2p_{n,k}r\right) \begin{pmatrix}N_{n,k}^+ \\ -N_{n,k}^-\end{pmatrix} \\
    & \quad - r^{\gamma_k} \me{-p_{n,k}r} \cdot n \cdot \, _1F_1\left(1-n,2\gamma_k+1,2p_{n,k}r\right) \begin{pmatrix}N_{n,k}^+ \\ N_{n,k}^-\end{pmatrix}
  \end{split}
\end{align}
where we abbreviate
\begin{align}
	p_{n,k} & = \sqrt{1-E_{n,k}^2} \qquad \text{and} \\
	\begin{split}
		N_{n,k}^\pm & = \frac{(2p_{n,k})^{\gamma_k+3/2}}{4 \, \Gamma(2\gamma_k+1)} \left(\frac{(1\pm E_{n,k}) \, \Gamma(2\gamma_k+n+1)}{\nu (\nu+k p_{n,k}) \, n!}\right)^{1/2}.
	\end{split}
\end{align}
Moreover, $_1F_1(a,b,z)$ denotes Kummer's confluent hypergeometric function,
$$
_1F_1(a,b,z) := \sum_{j=0}^\infty \frac{a(a+1)\cdots(a+j-1)}{b(b+1)\cdots(b+j-1)} \frac{z^j}{j!} \,.
$$
Note that if $a$ is a negative integer, which is the case in our application, then only finitely many terms in the series are nonzero.

\begin{proof}[Proof of Proposition~\ref{rhokdirac}]
  The explicit solution recalled above implies that
  \begin{align*}
    \tr_{\C^2} \one_{[0,1)}(D_{\nu,k})(r,r) & = r^{2\gamma_k} \sum_{n\in \mathcal N_k} \left( (N_{n,k}^+)^2 f^+_{n,k}(r) + (N_{n,k}^-)^2 f^-_{n,k}(r) \right)
  \end{align*}
  with
  \begin{align*}
    f^\pm_{n,k}(r) := e^{-2p_{n,k}r} & \Big( \Big(\frac{\nu}{p_{n,k}}+k\Big) \,\cdot\, _1F_1\left(-n,2\gamma_k+1,2p_{n,k}r\right) \\ & \quad \mp n\,\cdot\, _1F_1\left(1-n,2\gamma_k+1,2p_{n,k}r\right) \Big)^2 \,.
	\end{align*}
	We now use \cite[(18.5.12)]{NIST:DLMF} to express the hypergeometric functions in terms of Laguerre polynomials as
	\begin{align}
          \label{eq:laguerre1}
          _1F_1\left(-n,2\gamma_k+1,2p_{n,k} r\right) & = \frac{\Gamma(2\gamma_k+1)n!}{\Gamma(n+2\gamma_k+1)} L_n^{(2\gamma_k)}(2p_{n,k} r), \quad n\in\N_0, \\
          \label{eq:laguerre2}
		_1F_1\left(1-n,2\gamma_k+1,2p_{n,k} r\right) & = \frac{\Gamma(2\gamma_k+1)(n-1)!}{\Gamma(n+2\gamma_k)}L_{n-1}^{(2\gamma_k)}(2p_{n,k} r), \quad n\in\N.
	\end{align}
	Recall Szeg{\H{o}}'s inequality for Laguerre polynomials~\cite{Szego1918} (see also \cite[Eq.~(18.14.8)]{NIST:DLMF}), which states that, for all $\beta\geq 0$ and $x\in[0,\infty)$,
	\begin{align}
		|L_n^{(\beta)}(x)| \leq
		\frac{\Gamma(n+\beta+1)}{\Gamma(\beta+1)\Gamma(n+1)}\me{x/2}.
	\end{align}
	
	For the function $f^-_{n,k}$, we obtain immediately
	$$
	f_{n,k}^-(r) \leq \left( \left| \frac{\nu}{p_{n,k}}+k \right| + n \right)^2 \,.
	$$
	Using
	\begin{align}
		\frac{\nu}{p_{n,k}} = \sqrt{n^2+2n\gamma_k+k^2} \,,
	\end{align}
	we obtain, uniformly in $n$ and $k$,
	$$
	f_{n,k}^-(r) \lesssim (n+|k|)^2 \,.
	$$
	
	For the function $f^+_{n,k}$ we use the fact, which follows immediately from the definition of the hypergeometric series, see also \cite[(13.3.3)]{NIST:DLMF},
	\begin{align}
		\begin{split}
			& \left(\frac{\nu}{p_{n,k}}+k\right)\cdot\, _1F_1\left(-n,2\gamma_k+1,2p_{n,k} r\right) - n \cdot\, _1F_1\left(1-n,2\gamma_k+1,2p_{n,k} r\right) \\
			& \quad = \left(\frac{\nu}{p_{n,k}} + k - n - 2\gamma_k \right) \cdot\, _1F_1\left(-n, 2\gamma_k+1, 2p_{n,k} r\right) \\ & \qquad + 2\gamma_k \cdot\, _1F_1\left(-n, 2\gamma_k, 2p_{n,k} r\right).
		\end{split}
	\end{align}
	Now applying Szeg{\H{o}}'s bound yields
	$$
	f_{n,k}^+(r) \leq \left( \left| \frac{\nu}{p_{n,k}} + k - n - 2\gamma_k \right| + 2\gamma_k \right)^2 .
	$$
	Observing that
	$$
	\frac{\nu}{p_{n,k}} - n = n\, \frac{2 \frac{\gamma_k}{n} + \frac{k^2}{n^2}}{\sqrt{1 + 2 \frac{\gamma_k}{n} + \frac{k^2}{n^2}}+1} \lesssim |k| \,,
	$$
	we arrive at the bound
	$$
	f_{n,k}^+(r) \lesssim k^2 \,.
	$$
	
	To summarize, what we have shown so far is that
	\begin{align*}
          & \tr_{\C^2} \one_{[0,1)}(D_{\nu,k})(r,r) \\
          & \lesssim r^{2\gamma_k} \sum_{n\in\mathcal N_k} \left( (N_{n,k}^+)^2 k^2 + (N_{n,k}^-)^2 (n+|k|)^2 \right) \\
          & \lesssim r^{2\gamma_k} \sum_{n\in\mathcal N_k} G_{n,k} \frac{p_{n,k}^3}{\nu+kp_{n,k}} \left( (1+E_{n,k}) k^2 + (1-E_{n,k})(n+|k|)^2 \right),
	\end{align*}
	where
	$$
	G_{n,k} := \frac{(2p_{n,k})^{2\gamma_k} \, \Gamma(n+2\gamma_k+1)}{\Gamma(2\gamma_k+1)^2 \, n!} \,.
	$$
	Using 
	\begin{align*}
		p_{n,k} \leq \frac{\nu}{n+\gamma_k}
		\qquad\text{and}\qquad
		1-E_{n,k} \leq p_{n,k}^2 \,,
	\end{align*}
	we arrive at
	\begin{equation}
		\label{eq:rhodiracproof}
			\tr_{\C^2} \one_{[0,1)}(D_{\nu,k})(r,r) \lesssim r^{2\gamma_k} \sum_{n\in\mathcal N_k} G_{n,k} \frac{k^2}{(\nu+kp_{n,k})(n+|k|)^3} \,.
	\end{equation}
 
  	In the remainder of this proof we will show that
  	\begin{align}
  		\label{eq:lowerboundnupluskp}
  		\nu+kp_{n,k} \geq
  		\begin{cases}
  			\nu & \quad \text{if} \ k>0 \,, \\
  			c_\nu \, \frac{n}{n+|k|} & \quad \text{if} \ k<0 \,,
  		\end{cases}
  	\end{align}
  	and that
  	\begin{equation}
  		\label{eq:upperboundgnk}
  		  	G_{n,k} \leq A_\nu \left( \frac{B_\nu}{\gamma_k}\right)^{4\gamma_k}
  		  	\qquad\text{for all}\ n\in\mathcal N_k \,,\ k\in\Z\setminus\{0\} \,.
  	\end{equation}
  	
  	It is clear that the proposition follows from \eqref{eq:rhodiracproof}, \eqref{eq:lowerboundnupluskp} and \eqref{eq:upperboundgnk}. Indeed, when $k>0$, we can use that, uniformly in $k$,
  	$$
  	\sum_{n\in\N_0} \frac{k^2}{(n+|k|)^3} \lesssim 1 \,.
  	$$
  	When $k<0$ instead, we use
  	$$
 	\sum_{n\in\N} \frac{k^2}{n(n+|k|)^2} \lesssim 1+ \ln |k|
  	$$
  	and we absorb the logarithmic increase by slightly decreasing the constant $B_\nu$. Thus, it remains to prove \eqref{eq:lowerboundnupluskp} and \eqref{eq:upperboundgnk}.

  	The bound \eqref{eq:lowerboundnupluskp} for $k>0$ is clear. For $k<0$, we write
  	\begin{align*}
  		\nu+kp_{n,k}=\nu\left(1-\frac{|k|}{\sqrt{\nu^2+(n+\gamma_k)^2}}\right) = \nu \,\frac{A-|k|}{A}
  	\end{align*}
  	with
  	$$
  	A:=\sqrt{\nu^2+(n+\gamma_k)^2}.
  	$$
  	Since $A^2-k^2=n^2+2n\gamma_k$, we have
  	\begin{align*}
  		A-|k|=\frac{n^2+2n\gamma_k}{A+|k|}.
  	\end{align*}
  	Clearly,
  	\begin{align*}
  		A \leq n+|k| \,, \quad A+|k|\leq2(n+|k|)
  		\quad\text{and}\quad
  		n + 2\gamma_k \geq c_\nu (n+|k|) \,.
  	\end{align*}
  	Therefore
  	\begin{align*}
          \nu+kp_{n,k} = \nu \, \frac{n^2 + 2n\gamma_k}{A(A+|k|)} \geq \nu c_\nu\, \frac{n}{n+|k|} \,,
  	\end{align*}
  	as claimed. This proves \eqref{eq:lowerboundnupluskp}.
  	
  	We now turn to the proof of \eqref{eq:upperboundgnk}. We prove the bound only for $n\in\N$. The remaining case $n=0$ and $k>0$ follows by a minor modification of the argument. By Stirling's approximation, for any $x_0>0$ there is $0<C_0<\infty$ such that
  	\begin{align*}
          C_0^{-1} x^{x+1/2}\me{-x} \leq \Gamma(x+1) \leq C_0 x^{x+1/2}\me{-x} 
          \qquad\text{for all}\ x\geq x_0 \,.
  	\end{align*}
	We apply this with $x_0 = \min\{2\sqrt{1-\nu^2},1\}$ and obtain, using $p_{n,k}\leq\frac{\nu}{n+\gamma_k}$,
  	\begin{align*}
  		G_{n,k}
  		& \lesssim \frac{(2\nu)^{2\gamma_k}}{(n+\gamma_k)^{2\gamma_k}} \cdot \frac{(n+2\gamma_k)^{n+2\gamma_k+1/2} \, \me{-(n+2\gamma_k)}}{n^{n+1/2} \, \me{-n} \, (2\gamma_k)^{4\gamma_k+1} \, \me{-4\gamma_k}} \\
  		& \lesssim \sqrt{\frac{n+2\gamma_k}{n}}\left(1+\frac{2\gamma_k}{n}\right)^n \cdot \frac{(n+2\gamma_k)^{2\gamma_k}}{(n+\gamma_k)^{2\gamma_k}} \cdot \frac{\me{2\gamma_k} (2\nu)^{2\gamma_k}}{(2\gamma_k)^{4\gamma_k+1}}.
  	\end{align*}
  	Using the bounds
  	\begin{align*}
  		\left(1+\frac{2\gamma_k}{n}\right)^n \leq \me{2\gamma_k} \,, \qquad
  		\frac{n+2\gamma_k}{n+\gamma_k} \leq 2 \,, \qquad
  		\sqrt{\frac{n+2\gamma_k}{n}} \leq \sqrt{1+2\gamma_k} \,,
  	\end{align*}
  	we get
  	\begin{align*}
          G_{n,k} \lesssim \frac{\sqrt{1+2\gamma_k}}{2\gamma_k} \cdot
          \left( \frac{\me{2}\nu}{\gamma_k^2} \right)^{2\gamma_k} \,,
  	\end{align*}
  	as claimed in \eqref{eq:upperboundgnk}. This completes the proof of Proposition \ref{rhokdirac}.
\end{proof}


\subsection{Bounds for fixed angular momentum. II}

We now complement the bounds from Proposition \ref{rhokdirac}, which capture the behavior of the density near the origin, with uniform bounds that are better at large distances. While these bounds do not show a decay in $r$, they do show a decay in $|k|$ and this suffices for our proof of Theorem \ref{main2k}.

\begin{proposition}
	\label{rhokdiracinfinity}
	Let $\nu\in(0,1]$. Then there is a constant $A_{\nu}<\infty$ such that for all $k\in\Z\setminus\{0\}$ and $r\in\R_+$,
	$$
	\tr_{\C^2} \one_{[0,1)}(D_{\nu,k})(r,r) \leq A_{\nu} |k|^{-1} \,.
	$$
\end{proposition}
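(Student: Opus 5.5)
The plan is to adapt the argument used above for Theorem~\ref{chandral} at large $\ell$ to the radial Dirac operator $D_{\nu,k}$. The key observation is that the square of the free radial operator is diagonal:
\begin{align*}
D_{0,k}^2 = \begin{pmatrix} -\frac{d^2}{dr^2}+\frac{k(k+1)}{r^2}+1 & 0 \\ 0 & -\frac{d^2}{dr^2}+\frac{k(k-1)}{r^2}+1\end{pmatrix} = \begin{pmatrix} p_{\ell_+}^2+1 & 0\\ 0 & p_{\ell_-}^2+1\end{pmatrix},
\end{align*}
with $p_\ell$ as in \eqref{eq:defpl}, where $\ell_\pm\in\{|k|,|k|-1\}$. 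In particular $\ell_\pm+\tfrac12\ge |k|-\tfrac12$.

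For a constant $a>0$ fixed later, set $a_k := a\,k^{-2}$ and $T_k := D_{0,k}^2-1+a_k$. With $P:=\one_{[0,1)}(D_{\nu,k})$ we write
\begin{align*}
\tr_{\C^2} P(r,r) = \tr\, (T_kP)^* \, T_k^{-1}\delta_r T_k^{-1}\, (T_kP) \le \|T_kP\|^2 \, \tr_{\C^2}\, T_k^{-2}(r,r).
\end{align*}
The factor $T_k^{-2}(r,r)$ is a sum of two scalar kernels $(p_\ell^2+a_k)^{-2}(r,r)$ with $\ell\sim|k|$. This is exactly the quantity controlled in \cite[Lemma A.2]{Franketal2020P}, with $a_\ell\sim(\ell+\frac12)^{-2}$; it gives the bound $\lesssim (|k|+1)^3$ uniformly in $r$, since the three regimes there are bounded by $r$, $(r/\ell)^3$ and $\ell^3$ respectively. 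Hence it will suffice to prove
\begin{align*}
\|T_kP\| \lesssim_\nu k^{-2} \qquad\text{for all}\ |k|\ge K_\nu,
\end{align*}
which yields $\tr_{\C^2}P(r,r)\lesssim |k|^{-4}|k|^3=|k|^{-1}$.

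To bound $\|T_kP\|$, I write $D_{0,k}=D_{\nu,k}+\nu r^{-1}$ and expand
\begin{align*}
T_kP=(D_{\nu,k}^2-1+a_k)P+\nu\{D_{\nu,k},r^{-1}\}P+\nu^2r^{-2}P.
\end{align*}
The first term is harmless: by \eqref{eq:sommerfeldeigenvalues} every eigenvalue $E$ in the range of $P$ satisfies $0\le 1-E^2=p_{n,k}^2\le \nu^2/\gamma_k^2\lesssim k^{-2}$. The remaining terms are the main obstacle, because they require $\|r^{-1}P\|\lesssim k^{-2}$ and $\|r^{-2}P\|\lesssim k^{-4}$ as operator norms, not merely as expectations.

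Rather than expanding, I would try to mimic the $B$-operator trick from the large-$\ell$ Chandrasekhar proof. That is, I would show the relative bound
\begin{align*}
\|T_k\psi\|\le 2\,\|(D_{\nu,k}^2-1+a_k)\psi\|,
\end{align*}
on the domain of $D_{\nu,k}$, for $|k|$ large and $a$ chosen large. Since $D_{\nu,k}^2-1=D_{0,k}^2-1-\nu\{D_{0,k},r^{-1}\}+\nu^2r^{-2}$, this reduces to Hardy-type estimates for the perturbation. The estimates needed are $\|r^{-1}\psi\|\le(\ell+\frac12)^{-1}\|p_\ell\psi\|$ and $\|r^{-2}\psi\|\lesssim \ell^{-1}\|r^{-1}p_\ell\psi\|\lesssim\ell^{-2}\|p_\ell^2\psi\|$. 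Together with an elementary inequality of the form
\begin{align*}
P\sqrt{P^2+1}\le C A^{-1/2}(P^2+A)\qquad\text{for } P\ge 0,\ A\le 1,
\end{align*}
these give
\begin{align*}
\nu\|\{D_{0,k},r^{-1}\}\psi\|+\nu^2\|r^{-2}\psi\|\le C\nu a^{-1/2}\|T_k\psi\|+O(k^{-1})\|T_k\psi\|.
\end{align*}
To bound the anticommutator by $\|r^{-1}p_\ell D_{0,k}\psi\|$-type quantities, one needs $[D_{0,k},r^{-1}]\sim r^{-2}$, which is again handled by Hardy. Then $\|T_kP\|\le 2\|(D_{\nu,k}^2-1+a_k)P\|\le 2(\nu^2\gamma_k^{-2}+a_k)\lesssim k^{-2}$.

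The finitely many $|k|<K_\nu$ are treated separately, without tracking constants. For these I use the same factorization with $a_k$ replaced by a fixed constant. Boundedness of $T_kP$ follows because only finitely many $k$ occur and $P$ has range in the domain of $D_{\nu,k}$; moreover each eigenfunction in \eqref{eq:sommerfeldeigenfunctions} together with its derivative is controlled through Szeg\H{o}'s bound. Alternatively, one can sum $\sup_r|\psi_{n,k}(r)|^2$ directly, using the one-dimensional bound $\sup|\psi|^2\le 2\|\psi\|\,\|\psi'\|$ and $\|\psi_{n,k}'\|\lesssim_k (n+1)^{-1}$ from the eigenvalue equation. For the borderline $\nu=1$, $|k|=1$ (where $\gamma_k$ is small), the relative bound should be checked by hand; if it fails there, the explicit Laguerre representation would be used instead.
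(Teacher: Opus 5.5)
Your large-$|k|$ argument is sound and takes a different route from the paper's. You compare $D_{\nu,k}^2$ with $D_{0,k}^2=\mathrm{diag}(p_{\ell}^2+1)$ at the operator level, with Hardy for the anticommutator and Rellich for the $r^{-2}$ terms, and then use $(p_\ell^2+a_k)^{-2}(r,r)\lesssim |k|^3$. The paper instead compares quadratic forms of $F_{\nu,k}$ and the nonrelativistic Coulomb operator $H_{\nu,k}$, via $-\frac12\Delta=(D_0-1)+\frac12(D_0-1)^2$. Your intermediate weighted inequality $\|r^{-2}\psi\|\lesssim\ell^{-1}\|r^{-1}p_\ell\psi\|$ is unnecessary; Rellich's $\|r^{-2}\psi\|\le(\ell(\ell+1)-\frac34)^{-1}\|p_\ell^2\psi\|$ suffices.

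The gap is $|k|=1$, and it occurs for every $\nu\in(0,1]$, not only at $\nu=1$. For $|k|=1$ the two channels are $\ell=0$ and $\ell=1$. By \eqref{eq:sommerfeldeigenfunctions}, every eigenfunction has a nonzero multiple of $r^{\gamma_1}$, with $\gamma_1=\sqrt{1-\nu^2}\in[0,1)$, as the leading term in both components. In the $\ell=1$ channel, $p_1^2r^{\gamma_1}=(2+\gamma_1(1-\gamma_1))r^{\gamma_1-2}\notin L^2(0,1)$. Hence $\mathrm{ran}\,\one_{[0,1)}(D_{\nu,k})\not\subset\dom T_k$, so $T_k\one_{[0,1)}(D_{\nu,k})$ is not even defined on the eigenfunctions, and your factorization is vacuous. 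Your justification conflates domains: lying in $\dom D_{\nu,k}$ is a first-order condition and says nothing about $\dom D_{0,k}^2$. For $\nu\ge\sqrt3/2$ the eigenfunctions are not even in $\dom D_{0,k}$, since $r^{-1}\psi\notin L^2$ while Hardy's inequality forces $r^{-1}\psi\in L^2$ on $\dom D_{0,k}$.

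Your fallbacks do not close this. Even granting $\|\psi_{n,k}'\|\lesssim_k(n+1)^{-1}$, the estimate $\sum_n\sup_r|\psi_{n,k}(r)|^2\le 2\sum_n\|\psi_{n,k}\|\,\|\psi_{n,k}'\|$ gives the divergent series $\sum_n(n+1)^{-1}$. Moreover, for $\sqrt3/2\le\nu<1$ one has $\psi_{n,\pm1}'\notin L^2$ at all. The Laguerre representation with Szeg\H{o}'s inequality, as in Proposition \ref{rhokdirac}, only gives $\lesssim r^{2\gamma_k}$: the factor $\me{x/2}$ exactly cancels the exponential decay, so nothing uniform in $r$ comes out.

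The missing idea is to compare with an operator that carries the Coulomb singularity, so that its domain accommodates the $r^{\gamma_1}$ behaviour. The paper uses $\tilde L_{\kappa,k}=\mathrm{diag}(p_{\ell_k^\pm}-\kappa/r)$ with $\kappa$ from \eqref{eq:relationchandradirac}. It controls $B$ by the Morozov--M\"uller inequality $D_\nu^2\gtrsim_\nu(|p|-\kappa|x|^{-1})^2$. It then bounds $(p_\ell-\kappa r^{-1}+a)^{-2}(r,r)$ uniformly in $r$ via \eqref{eq:lbound}, which rests on the sharp heat kernel bounds for the relativistic Hardy operator.

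For $2\le|k|<K_\nu$ your route is repairable, since $\gamma_k>3/2$ there and the eigenfunctions do lie in $\dom D_{0,k}^2$. Boundedness of $T_k\one_{[0,1)}(D_{\nu,k})$ still needs an argument, for instance $\dom D_{\nu,k}^2\subset\dom D_{0,k}^2$ together with the closed graph theorem. The fact that only finitely many $k$ occur is not such an argument.
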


\begin{proof}[Proof of Proposition \ref{rhokdiracinfinity} for $|k|>1$]
	We begin by introducing some notation. We let
	$$
	\ell_k^+ := \begin{cases}
		k-1 & \text{if}\ k> 0 \,,\\
		-k & \text{if}\ k<0 \,,
	\end{cases}
	\qquad
	\ell_k^- := \ell_k^++1 \,.
	$$
	and, recalling the definition of $p_\ell$ from \eqref{eq:defpl}, we let
	$$
	H_{\nu,k}:= \begin{pmatrix}
		\tfrac12 p_{\ell_k^+}^2 - \tfrac\nu r \\ \tfrac12 p_{\ell_k^-}^2 - \tfrac\nu r
	\end{pmatrix}
	\quad\text{in}\ L^2(\R_+:\C^2) \,.
	$$
	Next, let
	$$
	F_{\nu,k}:= \one_{[0,\infty)}(D_{\nu,k}) \left( D_{\nu,k} - 1 \right) \one_{[0,\infty)}(D_{\nu,k}) 
	\quad\text{in}\ L^2(\R_+:\C^2)
	$$
	and 
	$$
	\Gamma_{\nu,k} := \one_{[0,1)}(D_{\nu,k}) \,.
	$$
	
	With a constant $a>0$ to be determined (depending on $\nu$, but not on $k$) we set $a_k:= a|k|^{-2}$ and write
	$$
	\tr_{\C^2} \one_{[0,1)}(D_{\nu,k})(r,r) = \tr A^*B^*C^*DCBA
	$$
	with
	\begin{align}
		\label{eq:abcddirac}
		\begin{split}
			A & = (F_{\nu,k}+a_k)^{1/2}\, \Gamma_{\nu,k} \,, \\
			B & = (H_{\nu,k} + a_k)^{1/2} \, \Gamma_{\nu,k} \, (F_{\nu,k}+a_k)^{-1/2} \,  \\
			C & = (H_{0,k}+ a_k)^{1/2} \, (H_{\nu,k}+ a_k)^{-1/2} \,, \\
			D & = (H_{0,k}+ a_k)^{-1/2} \ \delta_r \ (H_{0,k}+ a_k)^{-1/2} \,.
		\end{split}
	\end{align}	
	These operators are well-defined if
	\begin{align*}
		a_k & > |\inf\spec F_{\nu,k}| = 1- E_{0,k} = 1 - \left( \frac{k^2 - \nu^2}{k^2} \right)^{1/2}, \quad \text{and}\\
		a_k & > |\inf\spec H_{\nu,k}| = \max\left\{\frac{\nu^2}{2(1+\ell_k^+)^2}, \frac{\nu^2}{2(1+\ell_k^-)^2}\right\} \,,
	\end{align*}
	which can be achieved by a suitable choice of $a$. (To get the first inequality for $\nu=1$, we need $|k|\neq 1$.) 
	
	It follows that
	$$
	\tr_{\C^2} \one_{[0,1)}(D_{\nu,k})(r,r) \leq \|A\|^2 \|B\|^2 \|C\|^2 \tr D \,,
	$$
	and it remains to bound the terms on the right side.
	
	Clearly,
	$$
	\|A\| \leq a_k^{1/2} \lesssim |k|^{-1} \,.
	$$
	Moreover, in \cite[Appendix~B]{Franketal2020P}, we have shown that, whenever $a$ is sufficiently large, we have, uniformly in $k$,
	$$
	\|C\|\lesssim1
	$$ 
	and
	\begin{align}
		\tr D  \lesssim \frac{r}{|k|}\one_{r\leq k^2} + |k| \one_{r> k^2} \,.
	\end{align}
	In the remainder of this proof, we will show that for an appropriate choice of $a$, we have, uniformly in $k$,
	$$
	\| B \| \lesssim 1 \,.
	$$
	Once this is shown, we obtain
	$$
	\tr_{\C^2} \one_{[0,1)}(D_{\nu,k})(r,r) \lesssim |k|^{-3} r \, \one_{r\leq k^2} + |k|^{-1} \one_{r> k^2} \,,
	$$
	which implies, in particular, $\tr_{\C^2} \one_{[0,1)}(D_{\nu,k})(r,r) \lesssim |k|^{-1}$ for all $r\in\R_+$ and therefore the proposition in case $|k|>1$.
	
	Thus, we are left with estimating $\|B\|$ uniformly in $k$. Using
	\begin{align*}
		-\frac12\Delta = (-i\balpha\cdot\nabla+\beta-1)
		+ \frac12 (-i\balpha\cdot\nabla+\beta-1)^2,
	\end{align*}
	we find
	\begin{align}
		\label{eq:identity}
		\Gamma_{\nu,k} \left(H_{\nu,k}+a_k\right)\Gamma_{\nu,k}
		= \Gamma_{\nu,k} (F_{\nu,k}+a_k)\Gamma_{\nu,k}
		+ \frac12 \Gamma_{\nu,k} \left(D_{0,k}-1\right)^2 \Gamma_{\nu,k} \,.
	\end{align}
	Here we used the fact that $L^2\Phi =\ell^\pm_k(\ell^\pm_k+1)\Phi$ for $\Phi\in\mathcal H^\pm_k$ (see the discussion around \cite[Eq.~(8.187)]{Thaller2005}) and, consequently, if $f=\begin{pmatrix}f^+ \\ f^- \end{pmatrix} \in L^2(\R_+:\C^2)$ and $\Phi\in\mathcal K_k^+$,
	$$
	U (-\tfrac12\Delta) U^*(f^+\Phi-if^-\balpha\cdot\omega\Phi) = (H_{\nu,k}f)^+\Phi - i(H_{\nu,k}f)^-\balpha\cdot\omega\Phi \,.
	$$
	
	It remains to control the second term on the right side of \eqref{eq:identity}. We have, for any $\epsilon>0$,
	\begin{equation}\label{eq:prooflarger}
		\begin{split}
			(D_{0,k}-1)^2 & = (D_{\nu,k} + \nu r^{-1}-1)^2 \\
			& \leq (1+\epsilon^{-1}) (D_{\nu,k}-1)^2 + (1+\epsilon) \nu^2 r^{-2} \\
			& \leq (1+\epsilon^{-1}) (D_{\nu,k}-1)^2 + 2 (1+\epsilon) \nu^2 (k-1/2)^{-2} H_{0,k} \,.
		\end{split}
        \end{equation}
	Here we used Hardy's inequality $(\ell+1/2)^2 r^{-2} \leq p_\ell^2$ and 
	$$
	\max\{(\ell_k^++1/2)^{-2},(\ell_k^-+1/2)^{-2}\} = (k-1/2)^{-2} \,.
	$$
	
	Next, we note that for any $\epsilon'>0$ we can choose $a$ depending on $\epsilon'$ and $\nu$ such that
	$$
	H_{0,k} \leq (1+\epsilon') (H_{\nu,k}+a_k) \qquad\text{for all}\ k\in\Z\setminus\{0\} \,.
	$$	
	Indeed, this inequality is equivalent to
	$$
	\frac{\epsilon'}{1+\epsilon'} \begin{pmatrix}
		\frac12 p_{\ell_k^+}^2 \\ \frac12 p_{\ell_k^-}^2
	\end{pmatrix} - \frac{\nu}{r} \geq - a_k \,.
	$$
	Since the left side is bounded from below by $-((\epsilon')^{-1}+1) \frac12 \nu^2 \max\{ (\ell_k^++1)^{-2},(\ell_k^-+1)^{-2}\}$, this inequality holds, provided
	$$
	((\epsilon')^{-1}+1) \frac12 \nu^2 \max\{(\ell_k^++1)^{-2},(\ell_k^-+1)^{-2}\} \leq a_k \,.
	$$
	This can be achieved by choosing $a$ large.
	
	If we insert this bound into \eqref{eq:prooflarger} and project along $\Gamma_{\nu,k}$ we arrive at
	\begin{align*}
		\frac12 \Gamma_{\nu,k} \left(D_{0,k}-1\right)^2 \Gamma_{\nu,k}
		& \leq \frac12(1+\epsilon^{-1}) \Gamma_{\nu,k} (D_{\nu,k}-1)^2 \Gamma_{\nu,k} \\
		& \quad + (1+\epsilon)(1+\epsilon') \nu^2 (k-1/2)^{-2} \Gamma_{\nu,k} (H_{\nu,k}+a_k)\Gamma_{\nu,k} \,.
	\end{align*}
	
	Next, we use the elementary inequality
	$$
	\frac{(1-E)^2}{A -1 + E} \leq \frac{(1-E_0)^2}{A-1+E_0}
	\qquad\text{for all}\ 0< E_0 \leq E \leq 1
	\ \text{and}\ A> 1-E_0
	$$
	to deduce that
	$$
	\Gamma_{\nu,k} (D_{\nu,k}-1)^2 \Gamma_{\nu,k} \leq \frac{(1-E_{0,k})^2}{a_k -1+E_{0,k}} \Gamma_{\nu,k} ( F_{\nu,k} +a_k ) \Gamma_{\nu,k}
	$$
	with the lowest eigenvalue $E_{0,k}$ from \eqref{eq:sommerfeldeigenvalues}. Thus, we have shown that
	\begin{equation*}
		\begin{split}
			\frac12 \Gamma_{\nu,k} \left(D_{0,k}-1\right)^2 \Gamma_{\nu,k}
			& \leq \frac12(1+\epsilon^{-1}) \frac{(1-E_{0,k})^2}{a_k -1+E_{0,k}} \Gamma_{\nu,k} ( F_{\nu,k} +a_k ) \Gamma_{\nu,k} \\
			& \quad + (1+\epsilon)(1+\epsilon') \nu^2 (k-1/2)^{-2} \Gamma_{\nu,k} (H_{\nu,k}+a_k)\Gamma_{\nu,k} \,.
		\end{split}
	\end{equation*}
	
	Note that $|k|>1$. Therefore, we have $(k-1/2)^{-2} \leq 4/9$ and we can choose $\epsilon>0$ and $\epsilon'>0$ so small that
	$$
	(1+\epsilon)(1+\epsilon') (4/9) \leq 1/2 \,.
	$$
	It follows that
	$$
	(1+\epsilon)(1+\epsilon') \nu^2 (k-1/2)^{-2} \leq 1/2 \,.
	$$
	
	We also note that
	$$
	\frac12(1+\epsilon^{-1}) \frac{(1-E_{0,k})^2}{a_k -1+E_{0,k}} \lesssim |k|^{-2} \lesssim 1 \,,
	$$
	so that we can rewrite the inequality that we have shown as
	\begin{equation}
		\label{eq:prooflarger2}
		\begin{split}
			\frac12 \Gamma_{\nu,k} \left(D_{0,k}-1\right)^2 \Gamma_{\nu,k}
			& \leq c\, \Gamma_{\nu,k} ( F_{\nu,k} +a_k ) \Gamma_{\nu,k} + \frac12\, \Gamma_{\nu,k} (H_{\nu,k}+a_k)\Gamma_{\nu,k} \,.
		\end{split}
	\end{equation}
	
	Inserting this into \eqref{eq:identity} gives
	$$
	\Gamma_{\nu,k} (H_{\nu,k}+a_k)\Gamma_{\nu,k} \leq 2(1+c)\, \Gamma_{\nu,k} ( F_{\nu,k} +a_k ) \Gamma_{\nu,k}
	$$
	Multiplying by $(F_{\nu,k}+a_k)^{-1/2}$ (which commutes with $\Gamma_{\nu,k}$) from the left and the right, we arrive at the claimed bound $\|B\|^2 \leq 2(1+c)$. This completes the proof.
\end{proof}

\begin{proof}[Proof of Proposition \ref{rhokdiracinfinity} for $|k|=1$]
	In fact, we give the proof for arbitrary $k\in\Z\setminus\{0\}$, but without controlling the $k$-dependence.
	
	We let
	\begin{align}
		\label{eq:relationchandradirac}
		\kappa := \begin{cases} \sqrt{1-\nu^2}\cot\left(\frac{\pi}{2}\sqrt{1-\nu^2}\right) & \text{if}\ \nu<1 \,,\\
			\frac\pi 2  & \text{if}\ \nu=1 \,,
		\end{cases}
	\end{align}
	and introduce the operator
	$$
	\tilde L_{\kappa,k} := \begin{pmatrix}
		p_{\ell_k^+} - \tfrac\kappa r \\ p_{\ell_k^-} - \tfrac\kappa r
	\end{pmatrix}
	\quad\text{in}\ L^2(\R_+:\C^2) \,.
	$$
	
	With a constant $a$ we write
	$$
	\tr_{\C^2} \one_{[0,1)}(D_{\nu,k})(r,r) = \tr A^*B^*CBA
	$$
	with
	\begin{align}
		A & := (F_{\nu,k} + a) \, \one_{[0,1)}(D_{\nu,k}) \,, \\
		B & := (\tilde L_{\kappa,k}+a) \, \one_{[0,\infty)}(D_{\nu,k}) \, (F_{\nu,k} + a)^{-1} \,, \\
		C & := (\tilde L_{\kappa,k}+a)^{-1}\delta_r (\tilde L_{\kappa,k}+a)^{-1} \,,
	\end{align}
	These operators are well-defined if
	\begin{align*}
		a & > |\inf\spec F_{\nu,k}| = 1- E_{0,k} = 1 - \left( \frac{k^2 - \nu^2}{k^2} \right)^{1/2}, \quad \text{and}\\
		a & > |\inf\spec \tilde L_{\kappa,k}| \,,
	\end{align*}
	which can be achieved by a suitable choice of $a$ (depending on $\nu$ and $k$).
		
	We now treat the operators $A$, $B$ and $C$ individually. Clearly, we have
	$$
	\| A \| \leq a \,.
	$$
	
	Next, we show that
	$$
	\| B \| <\infty \,.
	$$
	Indeed, by 
	
	By \cite[Lemma~IV.4, Proof of Corollary I.2]{MorozovMueller2017},
	\begin{align}
		\label{eq:comparisondirachardy}
		D_\nu^{2} \geq (1-\nu^2) \left|-i\balpha\cdot\nabla - \nu|x|^{-1} \right|^{2} \gtrsim_\nu \left(|p|-\kappa|x|^{-1}\right)^{2},
	\end{align}
	where the operator on the right side acts trivially in spin space.
	Let
	$$
	\Lambda_{\nu} :=  \one_{[0,\infty)}(D_{\nu})
	$$
	and	
	$$
	F_\nu:= \one_{[0,\infty)}(D_{\nu}) \left( D_{\nu} - 1 \right) \one_{[0,\infty)}(D_{\nu}) 
	\quad\text{in}\ L^2(\R_+:\C^4) \,.
	$$
	Thus, we have for any $a>|\inf\spec F_\nu|=1-\sqrt{1-\nu^2}$,
	\begin{align*}
		\Lambda_\nu(|p|-\kappa |x|^{-1}+a)^{2}\Lambda_\nu
		& \leq 2 \Lambda_\nu((|p|-\kappa |x|^{-1})^{2}+a^{2})\Lambda_\nu \\
		& \lesssim_\nu \Lambda_\nu(D_\nu^{2} + a^{2})\Lambda_\nu \\
		& \lesssim_{a,\nu} (F_\nu+a)^{2}.
	\end{align*}
	This inequality implies, for all $k\in\Z\setminus\{0\}$,
	$$
	\one_{[0,\infty)}(D_{\nu,k}) \left( \tilde L_{\kappa,k} +a \right)^2 \one_{[0,\infty)}(D_{\nu,k})
	\lesssim_{a,\nu} (F_{\nu,k}+a)^2 \,.
	$$
	Thus, $\|B\|\lesssim 1$, as claimed.
	
	Finally, by \eqref{eq:lbound}, we have
	\begin{align*}
		\Tr C & = \Tr_{\C^2} (\tilde L_{\kappa,k} + a)^{-2}(r,r) \\
		& = (p_{\ell^+_k}-\kappa r^{-1} + a)^{-2}(r,r) + (p_{\ell^-_k}-\kappa r^{-1} + a)^{-2}(r,r) \\
		& \lesssim_{a,\ell} 1 \,.
	\end{align*}
	This completes the proof. 
\end{proof}



\def\cprime{$'$}

\end{document}